\documentclass[12pt]{article}

\usepackage[utf8]{inputenc}

\usepackage[style=alphabetic,
  sorting = nyt,
  sortcites=true,
  giveninits=true,
  date=year,
  isbn=false,
  maxbibnames=99,
  maxalphanames=6, 
  backend=biber]{biblatex}
\addbibresource{BCS-bibliography.bib}
\DeclareFieldFormat*{titlecase}{\MakeSentenceCase{#1}}

\DeclareSourcemap{\maps[datatype=bibtex]{\map{\step[fieldset=shorthand, null]}}}

\DeclareSourcemap{
  \maps[datatype=bibtex]{
    \map[overwrite]{
      \step[fieldsource=doi, final]
      \step[fieldset=url, null]
      \step[fieldset=eprint, null]
    }  
  }
}

\DeclareSourcemap{
  \maps[datatype=bibtex, overwrite]{
    \map{
      \step[fieldset=language, null]
      \step[fieldset=month, null]
      \step[fieldset=urldate, null]
    }
  }
}

\AtEveryBibitem{%
  \clearlist{language}%
  \clearlist{month}%
  \clearlist{urldate}%
  \clearfield{urldate}%
}

\renewbibmacro{in:}{}

\DeclareFieldFormat[misc]{title}{\mkbibquote{#1}}
\DeclareFieldFormat[article,inproceedings]{volume}{\mkbibbold{#1}}
\DeclareFieldFormat[inproceedings]{series}{\mkbibitalic{#1}}

\setcounter{biburllcpenalty}{9000}
\setcounter{biburlucpenalty}{9000}
\setcounter{biburlnumpenalty}{9000}

\usepackage[english]{babel}
 \usepackage{amsmath}
\usepackage{amsthm}
\usepackage{amssymb}
\usepackage[breaklinks=true]{hyperref}
\pdfstringdefDisableCommands{\def\eqref#1{(\ref{#1})}}
\usepackage{braket}
\usepackage{color}
\usepackage{comment}
\usepackage{graphicx}
\usepackage[margin=2.5cm]{geometry}
\usepackage[noblocks]{authblk}
\usepackage[noabbrev]{cleveref}

\usepackage{enumerate}


\usepackage{chngcntr}
\counterwithin{figure}{section}
\counterwithin{equation}{section}

\usepackage[UKenglish]{isodate}
\cleanlookdateon

\let\C\relax

\newcommand{\eps}{\varepsilon}
\newcommand{\E}{{\mathrm{e}}}
\newcommand{\I}{\mathrm{i}}
 \newcommand{\R}{ \mathbb{R} }
  \newcommand{\Sph}{ \mathbb{S} }
\newcommand{\C}{ \mathbb{C} }

\newcommand{\D}{\mathrm{d}}

 \newcommand{\norm}[1]{\left\Vert #1 \right\Vert} 
 \newcommand{\abs}[1]{\left\vert #1 \right\vert}
 
\newcommand{\longip}[3]{\left\langle #1 \middle\vert #2 \middle\vert #3 \right\rangle}

\DeclareMathOperator{\sgn}{sgn}
\DeclareMathOperator{\spec}{spec}

\DeclareMathOperator{\arcoth}{arcoth}

\newcommand{\ud}{\,\textnormal{d}}


\usepackage{seqsplit}

\theoremstyle{plain}
\newtheorem{thm}{Theorem}[section]

\newtheorem{prop}[thm]{Proposition}

\newtheorem{lemma}[thm]{Lemma}

\theoremstyle{definition}
\newtheorem{defi}[thm]{Definition}
\newtheorem{rmk}[thm]{Remark}
\newtheorem{assumption}[thm]{Assumption}

\newtheorem{remark}[thm]{Remark}

\crefname{thm}{theorem}{theorems}
\crefname{problem}{problem}{problems}
\crefname{lemma}{lemma}{lemmas}
\crefname{lem}{lemma}{lemmas}
\crefname{cor}{corollary}{corollaries}
\crefname{prop}{proposition}{propositions}
\crefname{conj}{conjecture}{conjectures}
\crefname{defn}{definition}{definitions}
\crefname{defi}{definition}{definitions}
\crefname{note}{note}{notes}
\crefname{ex}{example}{examples}
\crefname{remark}{remark}{remarks}
\crefname{rmk}{remark}{remarks}
\crefname{notation}{notation}{notations}
\crefname{assumption}{assumption}{assumptions}
\crefname{claim}{claim}{claims}
\crefname{claim*}{claim}{claims}

\allowdisplaybreaks

\title{Universality in low-dimensional BCS theory}
\author{\small Joscha Henheik\footnote{\href{mailto:joscha.henheik@ist.ac.at}{joscha.henheik@ist.ac.at}} \hspace{1.2cm} Asbjørn Bækgaard Lauritsen\footnote{\href{mailto:alaurits@ist.ac.at}{alaurits@ist.ac.at}} \hspace{1.2cm} Barbara Roos\footnote{\href{mailto:barbara.roos@ist.ac.at}{barbara.roos@ist.ac.at}} \\ Institute of Science and Technology Austria, Am Campus 1, 3400 Klosterneuburg, Austria}

\begin{document}
\maketitle
\begin{abstract}
It is a remarkable property of BCS theory that the ratio of the energy gap at zero temperature $\Xi$ and the critical temperature $T_c$ is (approximately) given by a universal constant, independent of the microscopic details of the fermionic interaction. This universality has rigorously been proven quite recently in three spatial dimensions and three different limiting regimes: weak coupling, low density, and high density. The goal of this short note is to extend the universal behavior to lower dimensions $d=1,2$ and give an exemplary proof in the weak coupling limit. 
\\~ \\
{
	\bfseries
	Keywords:
}
BCS theory, energy gap, critical temperature, BCS universality.
\\ 
{
	\bfseries
	Mathematics subject classification: 
}
81Q10, 46N50, 82D55
\end{abstract}

\section{Introduction}
The Bardeen–Cooper–Schrieffer (BCS) theory of superconductivity \cite{bcs.original} is governed by the \emph{BCS gap equation}. 
For translation invariant systems without external fields the BCS gap equation is 
\begin{equation}\label{eqn.bcs.gap.eqn}
  \Delta(p) = - \frac{1}{(2\pi)^{d/2}} \int_{\R^d} \hat{V}(p-q) \frac{\Delta(q)}{E_\Delta(p)}\tanh \left(\frac{E_\Delta(p)}{2T}\right) \ud q
\end{equation}
with dispersion relation $E_\Delta(p) = \sqrt{(p^2- \mu)^2 + |\Delta(p)|^2}$. Here, $T \ge 0$ denotes the temperature
and $\mu>0$ the chemical potential.
We consider dimensions
$d \in \set{1,2,3}$.
The Fourier transform of the potential $V\in L^1(\R^d) \cap L^{p_V}(\R^d)$ (with a $d$-dependent $p_V \ge 1$ to be specified below), modeling their effective interaction, is denoted by $\hat{V}(p) = (2 \pi)^{-d/2} \int_{\R^d} V(x) \E^{-\I p \cdot x} \D x$.

According to BCS theory, a system is in a superconducting state, if there exists a non-zero solution $\Delta$ 
to the gap equation \eqref{eqn.bcs.gap.eqn}. 
The question of existence of such a non-trivial solution $\Delta$ hinges, in particular, on the temperature $T$.
It turns out, there exists a critical temperature $T_c\geq 0$ such that for $T < T_c$ there exists a non-trivial solution, 
and for $T\geq T_c$ it does not \cite[Theorem \ref{thm:lincrit} and Definition \ref{def:Tc}]{hainzl.hamza.seiringer.solovej}. 
This critical temperature is one of the key (physically measurable) quantities of the theory and its asymptotic behavior, in three spatial dimensions, has been studied in three physically rather different limiting regimes: 
In a weak-coupling limit (i.e.~replacing $V \to \lambda V$ and taking $\lambda \to 0$) \cite{Hainzl.Seiringer.2008,frank.hainzl.naboko.seiringer}, in a low-density limit (i.e.~$\mu \to 0$) \cite{hainzl.seiringer.scat.length}, 
and in a high-density limit (i.e.~$\mu \to \infty$) \cite{Henheik.2022}.

As already indicated above, at zero temperature, the function $E_\Delta$ may be interpreted as the dispersion relation of a certain `approximate' Hamiltonian of the quantum many-body system, see \cite[Appendix A]{hainzl.hamza.seiringer.solovej}. 
In particular 
\begin{equation} \label{eq:energygap}
  \Xi := \inf_{p \in \R^d} E_\Delta(p)
\end{equation}
has the interpretation of an energy gap associated with the approximate BCS Hamiltonian and as such represents a second key quantity of the theory. Analogously to the critical temperature, 
the asymptotic behavior of this energy gap, again in three spatial dimensions, has been studied in the same three different limiting regimes:
In a weak coupling limit \cite{Hainzl.Seiringer.2008}, in a low density limit \cite{lauritsen.energy.gap.2021},
and in a high density limit \cite{Henheik.Lauritsen.2022}. 

In this paper, we focus on a remarkable feature of BCS theory, which is well known in the physics literature 
\cite{bcs.original,nozieres.schmitt-rink,langmann.et.al.2019}: 
The ratio of the energy gap $\Xi$ and critical temperature $T_c$ tends to a \emph{universal constant, independent of the microscopic details} of the interaction between the fermions, i.e.~the potential $V$. More precisely, in three spatial dimension, it holds that
\begin{equation} \label{eq:univintro}
  \frac{\Xi }{T_c} \approx \frac{\pi}{\E^{\gamma}}  \approx 1.76\,,
\end{equation}
where $\gamma\approx 0.577$ is the Euler-Mascheroni constant, in each of the three physically very different limits mentioned above. This result follows as a limiting equality by combining asymptotic formulas for the critical temperature $T_c$ (see \cite{frank.hainzl.naboko.seiringer, Hainzl.Seiringer.2008, hainzl.seiringer.scat.length, Henheik.2022}) and the energy gap $\Xi$ (see \cite{Hainzl.Seiringer.2008, lauritsen.energy.gap.2021, Henheik.Lauritsen.2022}) in the three different regimes.  Although these scenarios (weak coupling, low density, and high density) are physically rather different, they all have in common that `superconductivity is weak' and one can hence derive an asymptotic formula for $T_c$ and $\Xi$ as they depart from being zero (in the extreme cases $\lambda = 0$, $\mu = 0$, $\mu = \infty$, respectively). However, all the asymptotic expressions are \emph{not perturbative}, as they depend exponentially on the natural dimensionless small parameter in the respective limit. We refer to the above mentioned original works for details.

The goal of this note is to prove the \emph{same} universal behavior \eqref{eq:univintro}, which has already been established in three spatial dimension, also in dimensions $d=1,2$ in the weak coupling limit (i.e.~replacing $V \to \lambda V$ and taking $\lambda \to 0$). This situation serves as a showcase for the methods involved in the proofs of the various limits in three dimensions (see Remark~\ref{rmk:Tccomp} and Remark~\ref{rmk:Xicomp} below). 
Apart from the mathematical curiosity in $d=1,2$, there have been recent studies in lower-dimensional superconductors in the physics literature, out of which we mention one-dimensional superconducting nanowires \cite{natarajan2012superconducting} and two-dimensional `magic angle' graphene \cite{Cao.Fatemi.ea.2018}. 

In the remainder of this introduction, we briefly recall the mathematical formulation of BCS theory, which has been developed mostly by Hainzl and Seiringer, but also other co-authors \cite{hainzl.hamza.seiringer.solovej, frank.hainzl.naboko.seiringer, hainzl.seiringer.16}.
Apart from the universality discussed here, also many other properties of BCS theory have been shown using this formulation:
Most prominently, Ginzburg-Landau theory, as an effective theory describing superconductors close to the critical temperature, has been derived from BCS theory \cite{Frank.Hainzl.ea.2012,Frank.Lemm.2016,deuchert_microscopic_2022,Deuchert.Hainzl.ea.2022a}. More recently, it has been shown that the effect of boundary superconductivity occurs in the BCS model \cite{Hainzl.Roos.ea.2022}. 
We refer to \cite{hainzl.seiringer.16} for a more comprehensive review of developments in the mathematical formulation of BCS theory. 
The universal behavior in the weak coupling limit for lower dimensions $d=1,2$ is presented in Section \ref{sec:univ12}. Finally, in Section \ref{sec:proofs}, we provide the proofs of the statements from Section \ref{sec:univ12}.

\subsection{Mathematical formulation of BCS theory}
We will now briefly recall the mathematical formulation \cite{hainzl.hamza.seiringer.solovej, hainzl.seiringer.16} of BCS theory \cite{bcs.original}, which is an effective theory developed for describing superconductivity of a fermionic gas. In the following, we consider these fermions in $\R^d$, $d = 1,2$, at temperature $T \ge 0$ and chemical potential $\mu \in \R$, interacting via a two-body potential $V$, for which we assume the following. 
\begin{assumption} \label{ass:basic}
We have that $V$ is real-valued, reflection symmetric, i.e.~$V(x) = V(-x)$ for all $x \in \R^d$, and it satisfies $V \in L^{p_V}(\R^d)$, where $p_V = 1$ if $d = 1$, $p_V \in (1,\infty)$ if $d = 2$.
\end{assumption}
\noindent Moreover, we neglect external fields, in which case the system is translation invariant.

The central object in the mathematical formulation of the theory is the BCS functional, which can naturally be viewed as a function of BCS states $\Gamma$. These states are given by a pair of functions $(\gamma, \alpha)$ and can be conveniently represented as a $2 \times 2$ matrix valued Fourier multiplier on $L^2(\R^d) \oplus L^2(\R^d)$ of the form
\begin{equation} \label{eq:Gamma}
	\hat{\Gamma}(p) = \begin{pmatrix}
		\hat{\gamma}(p) & \hat{\alpha}(p)\\ \overline{\hat{\alpha}(p)} & 1- \hat{\gamma}(p)
	\end{pmatrix}
\end{equation}
for all $p \in \R^d$. In \eqref{eq:Gamma}, $\hat{\gamma}(p)$ denotes the Fourier transform of the one particle density matrix and $\hat{\alpha}(p)$ is the Fourier transform of the Cooper pair wave function. We require reflection symmetry of $\hat{\alpha}$, i.e.~$\hat{\alpha}(-p) = \hat{\alpha}(p)$, as well as $0 \le \hat{\Gamma}(p) \le 1$ as a matrix.

The \emph{BCS free energy functional} takes the form
\begin{equation} \label{eq:functional}
  \mathcal{F}_T[\Gamma]
    := \int_{\R^d}(p^2 - \mu) \hat{\gamma}(p) \D p - T S[\Gamma] + \int_{\R^d} V(x) |\alpha(x)|^2 \D x\,, 
    \qquad \Gamma \in \mathcal{D},
\end{equation}
\[
    \mathcal{D} 
    := \left\{   \hat{\Gamma}(p) = \begin{pmatrix}
    \hat{\gamma}(p) & \hat{\alpha}(p)\\ \overline{\hat{\alpha}(p)} & 1- \hat{\gamma}(p)
    \end{pmatrix}
      : 
      0 \le \hat{\Gamma} \le 1\,, \ \hat{\gamma} \in L^1(\R^d, (1 + p^2) \D p)\,, \ \alpha \in H^1_{\rm sym}(\R^d)  
      \right\}\,,
\]
where the \emph{entropy density} is defined as
\[
  S[\Gamma] = - \int_{\R^d} \mathrm{Tr}_{\C^2} \left[\hat{\Gamma}(p) \log \hat{\Gamma}(p)\right] \D p\,.
\]
The minimization problem associated with \eqref{eq:functional} is well defined. In fact, the following result has only been proven for $d=3$ and $V \in L^{3/2}(\R^3)$, but its extension to $d= 1, 2$ is straightforward. 
\begin{prop}[{\cite{hainzl.hamza.seiringer.solovej}, see also \cite{hainzl.seiringer.16}}] \label{prop:exofmin}
Under Assumption \ref{ass:basic} on $V$, the BCS free energy is bounded below on $\mathcal{D}$ and attains its minimum. 
\end{prop}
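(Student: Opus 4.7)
The plan is to adapt the three-dimensional proof of \cite{hainzl.hamza.seiringer.solovej} to the lower-dimensional Sobolev framework dictated by Assumption~\ref{ass:basic}. The structure of the argument is unchanged; only the Sobolev embedding used to control the interaction term is dimension-dependent. Two tasks must be carried out in sequence: first, establishing that $\mathcal{F}_T$ is bounded below on $\mathcal{D}$, and second, verifying via the direct method that a minimizing sequence converges (along a subsequence) to a minimizer.

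For the lower bound, I would compare with the normal state $\Gamma_0$ defined by $\hat{\gamma}_0(p) = (1 + \E^{(p^2-\mu)/T})^{-1}$ (replaced by $\mathbf{1}\{p^2 \le \mu\}$ when $T = 0$) and $\hat{\alpha}_0 \equiv 0$, and write
\[
  \mathcal{F}_T[\Gamma] - \mathcal{F}_T[\Gamma_0] = T\,\mathcal{H}(\Gamma,\Gamma_0) + \int_{\R^d} V(x)\, \abs{\alpha(x)}^2 \D x\,,
\]
where $\mathcal{H}(\Gamma,\Gamma_0) \ge 0$ is the relative entropy of $\Gamma$ with respect to $\Gamma_0$. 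Pointwise convexity, exactly as in \cite[Lemma~1]{hainzl.hamza.seiringer.solovej}, yields a bound of the form
\[
  T\,\mathcal{H}(\Gamma,\Gamma_0) \ge c \int_{\R^d} (1 + p^2)\, \abs{\hat{\alpha}(p)}^2 \D p
\]
with $c = c(T,\mu) > 0$, so the relative entropy controls $\norm{\alpha}_{H^1}^2$. The interaction term is then handled by H\"older's inequality followed by a Sobolev embedding: for $d=1$ one uses $H^1(\R) \hookrightarrow L^\infty(\R)$ with $V \in L^1$, while for $d=2$ one uses $H^1(\R^2) \hookrightarrow L^{2p_V/(p_V-1)}(\R^2)$ with $V \in L^{p_V}$, $p_V > 1$. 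Splitting $V = V_1 + V_2$ with $V_1$ bounded and $\norm{V_2}_{p_V}$ small, and using the pointwise constraint $\abs{\hat{\alpha}}^2 \le \hat{\gamma}(1 - \hat{\gamma})$ to dominate $\norm{\alpha}_2^2$ by the kinetic part, the interaction can be absorbed into the relative entropy term up to a finite additive constant, producing $\mathcal{F}_T[\Gamma] \ge -C(T,\mu,V)$ uniformly on $\mathcal{D}$. The $T = 0$ case is handled analogously using the zero-temperature identity of \cite[Sect.~3]{hainzl.hamza.seiringer.solovej}.

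For existence, the coercivity above implies that any minimizing sequence $(\Gamma_n) \subset \mathcal{D}$ satisfies $\norm{\alpha_n}_{H^1} \le C$ and $\int (1 + p^2)\, \hat{\gamma}_n(p)\, \D p \le C$. Passing to a subsequence yields $\alpha_n \rightharpoonup \alpha$ in $H^1_{\rm sym}(\R^d)$ and $\hat{\gamma}_n \to \hat{\gamma}$ pointwise almost everywhere, with the matrix constraint $0 \le \hat{\Gamma} \le 1$ inherited by the limit. Lower semicontinuity of the kinetic term follows from Fatou's lemma, of the entropy from concavity of $-x \log x - (1 - x) \log(1 - x)$ (exactly as in the $d=3$ argument), and continuity of the interaction term along the sequence from Rellich--Kondrachov (giving $\alpha_n \to \alpha$ strongly in $L^q_{\rm loc}$ for the relevant $q$) together with the Sobolev tail estimate already used in the lower bound.

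The only genuine technical obstacle compared to $d = 3$ is the failure of the embedding $H^1(\R^2) \hookrightarrow L^\infty(\R^2)$, which is precisely why Assumption~\ref{ass:basic} imposes $p_V > 1$ in the two-dimensional case; by contrast, $d = 1$ is in fact easier than $d = 3$, since $H^1(\R) \hookrightarrow L^\infty(\R)$ makes the bare condition $V \in L^1$ sufficient for the same Sobolev argument to go through.
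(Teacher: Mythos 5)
The paper does not actually prove \Cref{prop:exofmin}: it simply cites \cite{hainzl.hamza.seiringer.solovej,hainzl.seiringer.16} for $d=3$, $V \in L^{3/2}(\R^3)$, and asserts that the extension to $d=1,2$ is straightforward. Your proposal is therefore not competing with a proof in the text; it is filling in what ``straightforward'' means, and it does so essentially correctly. You identify the one genuinely dimension-dependent ingredient — the Sobolev embedding used to control $\int V |\alpha|^2$ by the $H^1$-norm of $\alpha$ — and correctly note that $H^1(\R)\hookrightarrow L^\infty(\R)$ (so $V\in L^1$ suffices for $d=1$) while for $d=2$ one uses $H^1(\R^2)\hookrightarrow L^q(\R^2)$ for all finite $q$, which is exactly why Assumption~\ref{ass:basic} requires $p_V>1$ there. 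The relative-entropy comparison with the normal state, the splitting $V=V_1+V_2$, and the direct-method argument are all taken verbatim from the $d=3$ proof, as they should be.

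One inaccuracy worth flagging: the displayed coercivity bound
\[
T\,\mathcal{H}(\Gamma,\Gamma_0) \ge c\int_{\R^d}(1+p^2)\,|\hat\alpha(p)|^2\,\D p
\]
with $c=c(T,\mu)>0$ holds only for $T>0$ (it follows from $K_T(p)\ge 2T$ together with $K_T(p)\sim p^2$), and the constant degenerates as $T\to 0$. At $T=0$ the analogous zero-temperature comparison produces the weight $|p^2-\mu|$ rather than $1+p^2$, which vanishes on the Fermi sphere and does \emph{not} control $\|\hat\alpha\|_{L^2}$ near $|p|=\sqrt{\mu}$. The missing control must come from the constraint $|\hat\alpha|^2\le\hat\gamma(1-\hat\gamma)\le 1/4$ on a shell around the Fermi sphere — an ingredient you invoke elsewhere in the argument, but not at the point where you claim the $T=0$ case is ``handled analogously.'' Making that connection explicit would close the only real gap in the sketch; everything else is a faithful transcription of the $d=3$ argument with the correct dimension-dependent Sobolev input.
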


\noindent
The BCS gap equation \eqref{eqn.bcs.gap.eqn} arises as the Euler--Lagrange equations of this functional \cite{hainzl.hamza.seiringer.solovej}.
Namely by defining $\Delta = -2\widehat{V\alpha}$, the Euler--Lagrange equation for $\alpha$ takes the form of the BCS gap equation \eqref{eqn.bcs.gap.eqn}.
Additionally, one has the following linear criterion for the BCS gap equation to have non-trivial solutions. 
Again, so far, a proof has only been given in spatial dimension $d=3$ and for $V \in L^{3/2}(\R^3)$, but its extension to $d= 1, 2$ is straightforward. 
\begin{thm}[{\cite[Thm.~1]{hainzl.hamza.seiringer.solovej}}] \label{thm:lincrit}
Let $V$ satisfy Assumption \ref{ass:basic} and let $\mu\in\R$ as well as $T\geq 0$. 
Then, writing $\mathcal{F}_T[\Gamma] \equiv \mathcal{F}_T(\gamma, \alpha)$, the following are equivalent. 
\begin{enumerate}
\item The minimizer of $\mathcal{F}_T$ is not attained with $\alpha = 0$, i.e. 
\[
  \inf_{(\gamma, \alpha)\in \mathcal{D}} \mathcal{F}_T(\gamma, \alpha) < \inf_{(\gamma, 0) \in \mathcal{D}} \mathcal{F}_T(\gamma,0),
\]
\item 
There exists a pair $(\gamma, \alpha)\in \mathcal{D}$ with $\alpha \ne 0$ such that $\Delta = -2\widehat{V\alpha}$
satisfies the BCS gap equation \eqref{eqn.bcs.gap.eqn},
\item The linear operator $K_T + V$, where $K_T(p) = \frac{p^2-\mu}{\tanh((p^2-\mu)/(2T))}$ has at least one negative eigenvalue.
\end{enumerate}
\end{thm}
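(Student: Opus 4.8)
The plan is to adapt the proof of \cite[Thm.~1]{hainzl.hamza.seiringer.solovej}, written there for $d=3$ and $V\in L^{3/2}(\R^3)$, to $d\in\{1,2\}$ under \cref{ass:basic}. It is convenient to split the statement into the spectral equivalence $(1)\Leftrightarrow(3)$, which carries the analytic content, and the variational equivalence $(1)\Leftrightarrow(2)$, which is the Euler--Lagrange analysis. The only place where the dimension genuinely enters is in establishing that $V$ is infinitesimally form-bounded and form-compact relative to the Fourier multiplier $K_T$, so that $K_T+V$ is a well-defined self-adjoint, lower semibounded operator and the quadratic forms below make sense on $H^1_{\rm sym}(\R^d)$; everything else is algebra that does not see $d$.

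For $(1)\Leftrightarrow(3)$ and $T>0$, the $\alpha=0$ part of $\mathcal F_T$ is minimized by the normal state $\Gamma_0=\mathrm{diag}(\hat\gamma_0,1-\hat\gamma_0)$ with $\hat\gamma_0(p)=(1+\E^{(p^2-\mu)/T})^{-1}$, characterized by $p^2-\mu=T\log\frac{1-\hat\gamma_0}{\hat\gamma_0}$ (the case $T=0$ being the limit $T\to0^+$, treated the same way in \cite{hainzl.hamza.seiringer.solovej}). Using $\Tr_{\C^2}\hat\Gamma(p)=1$, a short computation rewrites the free part of $\mathcal F_T$ through a relative entropy and yields
\[
  \mathcal F_T[\Gamma]-\mathcal F_T[\Gamma_0] \;=\; T\int_{\R^d}\Tr_{\C^2}\!\big[\hat\Gamma(p)\log\hat\Gamma(p)-\hat\Gamma(p)\log\Gamma_0(p)\big]\,\D p \;+\; \int_{\R^d} V(x)\,|\alpha(x)|^2\,\D x\,,
\]
with the first integrand a nonnegative relative entropy. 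One then proves the pointwise convexity bound $T\,\Tr_{\C^2}[\hat\Gamma(p)\log\hat\Gamma(p)-\hat\Gamma(p)\log\Gamma_0(p)]\ge K_T(p)\,|\hat\alpha(p)|^2$, which combined with the above gives the central inequality $\mathcal F_T[\Gamma]-\mathcal F_T[\Gamma_0]\ge\langle\alpha,(K_T+V)\alpha\rangle$. Hence, if $K_T+V$ has no negative eigenvalue --- equivalently, since $V$ is $K_T$-form-compact, if $K_T+V\ge0$ --- then $\mathcal F_T[\Gamma]\ge\mathcal F_T[\Gamma_0]$ on $\mathcal D$, so $\Gamma_0$ is a global minimizer, the two infima in (1) coincide, and (1) fails. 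For the converse, let $\psi$ be a normalized eigenfunction with $\langle\psi,(K_T+V)\psi\rangle<0$; approximating $\psi$ in form sense by $\psi_n$ with $\hat\psi_n$ bounded and supported away from the Fermi surface $\{p^2=\mu\}$, one chooses $\hat\gamma_\epsilon$ so that $\hat\Gamma_\epsilon(p)$ with off-diagonal entry $\epsilon\hat\psi_n(p)$ has the same eigenvalues as $\Gamma_0(p)$ --- possible for small $\epsilon$ since $(\hat\gamma_0-\tfrac12)^2$ is bounded below on $\supp\hat\psi_n$ --- and a Taylor expansion gives $T\,\Tr_{\C^2}[\hat\Gamma_\epsilon\log\hat\Gamma_\epsilon-\hat\Gamma_\epsilon\log\Gamma_0]=\epsilon^2 K_T(p)|\hat\psi_n(p)|^2+O(\epsilon^4)$ there, so $\mathcal F_T[\Gamma_\epsilon]-\mathcal F_T[\Gamma_0]=\epsilon^2\langle\psi_n,(K_T+V)\psi_n\rangle+O(\epsilon^4)<0$ for $n$ large and $\epsilon$ small; thus (1) holds.

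For $(1)\Leftrightarrow(2)$, the standard Euler--Lagrange analysis applies. If (1) holds, a minimizer $\Gamma$ exists by \cref{prop:exofmin} and must have $\alpha\ne0$ (any $\alpha=0$ state has energy $\ge\mathcal F_T[\Gamma_0]>\inf_{\mathcal D}\mathcal F_T$); its Euler--Lagrange equations imply that $\Delta=-2\widehat{V\alpha}$ solves the gap equation \eqref{eqn.bcs.gap.eqn}, and $\Delta\ne0$ since otherwise the interaction term vanishes and $\mathcal F_T[\Gamma]\ge\mathcal F_T[\Gamma_0]$. Conversely, from a nontrivial solution $\Delta$ of \eqref{eqn.bcs.gap.eqn} one builds the associated BCS state $\Gamma_\Delta\in\mathcal D$ with $\hat\alpha_\Delta=\frac{\Delta}{2E_\Delta}\tanh\frac{E_\Delta}{2T}$, which lies in $H^1_{\rm sym}$ by the $K_T$-form-bound on $V$; it is a critical point of $\mathcal F_T$ with $\mathcal F_T[\Gamma_\Delta]<\mathcal F_T[\Gamma_0]$, so (1) holds.

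It remains to supply the one dimension-dependent input, which is also the main (though mild) obstacle. Since $K_T(p)\ge\max\{2T,|p^2-\mu|\}$ and $K_T(p)\sim p^2$ as $|p|\to\infty$, the form domain of $K_T$ is $H^1(\R^d)$; by the Sobolev embedding $H^1(\R^d)\hookrightarrow L^q(\R^d)$, with $q=\infty$ for $d=1$ and any $q\in[2,\infty)$ for $d=2$, together with Hölder's inequality, $V\in L^{p_V}(\R^d)$ with $p_V=1$ ($d=1$), respectively $p_V\in(1,\infty)$ ($d=2$), is infinitesimally $K_T$-form-bounded and $K_T$-form-compact. This is exactly the low-dimensional counterpart of the use of $V\in L^{3/2}(\R^3)$ and $H^1(\R^3)\hookrightarrow L^6(\R^3)$ in \cite{hainzl.hamza.seiringer.solovej}; with it in hand --- together with \cref{prop:exofmin}, whose proof carries over verbatim --- the three-dimensional argument applies essentially line by line, the only real task being to track the replacement of the $d=3$ Sobolev inequality by its $d=1,2$ analogues in the form-boundedness estimates and in the trial-state construction above.
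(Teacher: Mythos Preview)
The paper does not supply its own proof of this theorem: it merely cites \cite[Thm.~1]{hainzl.hamza.seiringer.solovej} and remarks that the extension from $d=3$, $V\in L^{3/2}(\R^3)$ to $d=1,2$ under \Cref{ass:basic} ``is straightforward.'' Your proposal carries out exactly that straightforward extension, following the original argument of \cite{hainzl.hamza.seiringer.solovej} (relative-entropy lower bound for $(3)\Rightarrow(1)$, eigenvalue-preserving trial state for $(1)\Rightarrow(3)$, Euler--Lagrange for $(1)\Leftrightarrow(2)$) and correctly isolating the single dimension-dependent ingredient, namely that $V\in L^{p_V}(\R^d)$ is infinitesimally form-bounded and form-compact relative to $K_T$ via the appropriate Sobolev embedding $H^1(\R^d)\hookrightarrow L^q(\R^d)$ in place of $H^1(\R^3)\hookrightarrow L^6(\R^3)$. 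This is precisely what the paper has in mind, so your approach is correct and matches the paper's (implicit) one.
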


\noindent
The third item immediately leads to the following definition of the \emph{critical temperature} $T_c$ for the existence of non-trivial solutions of the BCS gap equation \eqref{eqn.bcs.gap.eqn}. 
\begin{defi}[{Critical temperature, see \cite[Def.~1]{frank.hainzl.naboko.seiringer}}]
\label{def:Tc}
For $V$ satisfying \Cref{ass:basic},
we define the critical temperature $T_c \ge 0$ as 
\begin{equation} \label{eq:Tc}
	T_c := \inf\{ T > 0 : K_T + V \ge 0 \}\,.
\end{equation}
By $K_T(p)\geq 2T$ and the asymptotic behavior $K_{T}(p) \sim p^2$ for $|p| \to \infty$, Sobolev's inequality \cite[Thm.~8.3]{analysis} implies that the critical temperature is well defined.
\end{defi}

\noindent
The other object we study is the energy gap $\Xi$ defined in \eqref{eq:energygap}.
The energy gap depends on the solution $\Delta$ of the gap equation \eqref{eqn.bcs.gap.eqn} at $T=0$.
A priori, $\Delta$ may not be unique.
However, for potentials with non-positive Fourier transform, this possibility can be ruled out. 
\begin{prop}[{see \cite[(21)-(22) and Lemma~2]{Hainzl.Seiringer.2008}}]
\label{prop:alphaunique}
Let $V$ satisfy Assumption \ref{ass:basic} (and additionally $V \in L^1(\R^2)$ in case that $d = 2$). Moreover, we assume that $\hat{V} \le 0$ and $\hat{V}(0)<0$. Then, there exists a unique minimizer $\Gamma$ of $\mathcal{F}_0$ (up to a constant phase in $\alpha$).
One can choose the phase such that $\alpha$ has strictly positive Fourier transform $\hat{\alpha} >0 $. 
\end{prop}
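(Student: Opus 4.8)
The plan is to reduce the minimization of $\mathcal{F}_0$ to one over $\alpha$ alone, to uncover a hidden convex structure, and then to propagate positivity through the gap equation. First I would minimize out $\gamma$: for fixed $\hat\alpha$ the constraint $0\le\hat\Gamma(p)\le1$ is equivalent to $\hat\gamma(p)\big(1-\hat\gamma(p)\big)\ge|\hat\alpha(p)|^2$, and since $\mathcal{F}_0[\Gamma]=\int(p^2-\mu)\hat\gamma\,\D p+\int V|\alpha|^2\,\D x$ is linear in $\hat\gamma$, it is minimized pointwise on the boundary at $\hat\gamma(p)=\tfrac12\big(1-\sgn(p^2-\mu)\sqrt{1-4|\hat\alpha(p)|^2}\big)$, uniquely determined a.e.\ by $\hat\alpha$. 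Substituting back, $\mathcal{F}_0$ reduces, up to an additive constant, to
\[
  \mathcal{E}(\alpha)=\tfrac12\int_{\R^d}|p^2-\mu|\big(1-\sqrt{1-4|\hat\alpha(p)|^2}\big)\,\D p+\int_{\R^d}V(x)|\alpha(x)|^2\,\D x
\]
on $\{\alpha\in H^1_{\mathrm{sym}}:|\hat\alpha|\le\tfrac12\}$, and minimizers of $\mathcal{F}_0$ correspond bijectively to minimizers of $\mathcal{E}$, with $\gamma$ determined by $\alpha$; existence is \Cref{prop:exofmin}. Under the stated hypotheses the minimizer is non-trivial, $\alpha\ne0$ — this is classical, uses $\hat V(0)<0$, and amounts to $T_c>0$, which holds in $d=1,2$ whenever $\hat V\le0$, $\hat V\not\equiv0$.

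Next I would pass to non-negative $\hat\alpha$. Writing $\int V|\alpha|^2\,\D x=(2\pi)^{-d/2}\int\!\!\int\overline{\hat\alpha(p)}\,\hat V(p-q)\,\hat\alpha(q)\,\D p\,\D q$ and using $\hat V\le0$ together with $\tRe\big(\overline{\hat\alpha(p)}\hat\alpha(q)\big)\le|\hat\alpha(p)|\,|\hat\alpha(q)|$, replacing $\hat\alpha$ by $|\hat\alpha|$ leaves the first term of $\mathcal{E}$ unchanged and does not increase the second, so there is a minimizer with $\hat\alpha\ge0$. On this class I would introduce the variable $u:=\hat\alpha^2\ge0$, which ranges over the \emph{convex} set $\{u\in L^1(\R^d,(1+p^2)\,\D p):0\le u\le\tfrac14,\ u\ \text{even}\}$ and in which
\[
  \mathcal{E}=\tfrac12\int_{\R^d}|p^2-\mu|\big(1-\sqrt{1-4u(p)}\big)\,\D p-\frac{1}{4(2\pi)^{d/2}}\int\!\!\int\sqrt{u(p)u(q)}\;|\hat V(p-q)|\,\D p\,\D q.
\]
The first term is \emph{strictly} convex in $u$ since $s\mapsto1-\sqrt{1-4s}$ is strictly convex on $[0,\tfrac14]$; the second is convex since $(a,b)\mapsto\sqrt{ab}$ is concave on $[0,\infty)^2$, so $-\sqrt{u(p)u(q)}$ is convex and remains so after integrating against the positive measure $|\hat V(p-q)|\,\D p\,\D q$. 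Hence $\mathcal{E}$ is convex in $u$; if $u_0,u_1$ both minimize, then $\mathcal{E}$ is affine on the segment joining them, and the strict convexity of the first term — which only degenerates on the measure-zero Fermi surface $\{p^2=\mu\}$ — forces $u_0=u_1$ a.e. Thus the minimizer with $\hat\alpha\ge0$ is unique.

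For positivity, this unique $\hat\alpha\ge0$ is a critical point, so $\Delta:=-2\widehat{V\alpha}\ge0$ solves the $T=0$ gap equation $\Delta(p)=(2\pi)^{-d/2}\int|\hat V(p-q)|\,\Delta(q)/E_\Delta(q)\,\D q$. Since $V\in L^1$ — precisely the extra hypothesis imposed in $d=2$ — the function $\hat V$ is continuous, and $\hat V(0)<0$ gives $|\hat V|\ge c>0$ on a ball $B_\delta(0)$; if $\Delta(p_0)=0$ for some $p_0$, the gap equation (nonnegative integrand with zero integral) forces $\Delta\equiv0$ on $B_\delta(p_0)$, hence, iterating, on all of $\R^d$, contradicting $\alpha\ne0$. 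So $\Delta>0$ everywhere and $\hat\alpha=\Delta/(2E_\Delta)>0$. Finally, that an arbitrary minimizer differs from this one only by a constant phase: for such $\alpha$, $|\hat\alpha|$ is again a minimizer, so the inequality used above is an equality, i.e.\ $\overline{\hat\alpha(p)}\hat\alpha(q)\ge0$ whenever $\hat V(p-q)\ne0$; since $\hat\alpha$ is continuous (again from $\Delta=-2\widehat{V\alpha}$ with $V\alpha\in L^1$) and nowhere zero (its modulus is the unique positive minimizer just found), $\arg\hat\alpha$ is continuous and locally constant, hence constant on $\R^d$. This yields uniqueness of the minimizer of $\mathcal{F}_0$ up to a constant phase in $\alpha$ (with $\gamma$ determined), and the claimed choice of phase.

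I expect the main obstacle to be the identification of the convexifying substitution $u=\hat\alpha^2$ and the observation that the attractive interaction term becomes convex through the concavity of the geometric mean; by contrast the finiteness of $\mathcal{E}$ on the admissible class, the harmless degeneracy of strict convexity on the Fermi surface, the propagation of positivity (whence the continuity requirement on $\hat V$), and the non-triviality $\alpha\ne0$ in low dimensions are routine or standard.
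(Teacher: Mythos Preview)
The paper does not supply its own proof of \Cref{prop:alphaunique}; it simply imports the result from \cite[(21)--(22) and Lemma~2]{Hainzl.Seiringer.2008}. Your proposal reconstructs precisely that argument: the reduction to a functional of $\alpha$ at $T=0$, the convexity in the variable $u=\hat\alpha^2$ via strict convexity of $s\mapsto 1-\sqrt{1-4s}$ together with concavity of the geometric mean, and the propagation of strict positivity through the gap equation using continuity of $\hat V$ (which is where the extra $L^1$ hypothesis in $d=2$ enters). So your approach coincides with the cited one, and nothing in the present paper goes beyond that reference for this proposition.

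Two minor remarks. First, the numerical prefactor in your displayed interaction term should be $(2\pi)^{-d/2}$ rather than $\tfrac{1}{4}(2\pi)^{-d/2}$; this is cosmetic and does not affect the convexity argument. Second, in Step~6 you invoke continuity of $\hat\alpha$ via $\hat\alpha=\Delta/(2E_\Delta)$; strictly speaking this uses that the Euler--Lagrange equation holds in its pointwise form for the minimizer, which in turn requires knowing $|\hat\alpha|<\tfrac12$ a.e.\ so that the constraint is inactive --- this follows since equality $|\hat\alpha(p)|=\tfrac12$ on a set of positive measure would make the kinetic integrand non-optimal under small perturbations, but it is worth stating. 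Neither point is a gap in the argument.
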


\noindent
In particular, we conclude that $\Delta$ is strictly positive. Moreover, by means of the gap equation \eqref{eqn.bcs.gap.eqn}, $\Delta$ is continuous and thus $\Xi > 0$.

\section{Main Results} \label{sec:univ12}
As explained in the introduction, our main result in this short note is the extension of the universality \eqref{eq:univintro} from $d=3$ to lower spatial dimensions $d=1,2$ in the limit of weak coupling (i.e., replacing $V \to \lambda V$ and taking $\lambda \to 0$). We assume the following properties for the interaction potential~$V$.
\begin{assumption}\label{assumption.d=2} Let $d \in \set{1,2}$ and assume that $V$ satisfies Assumption \ref{ass:basic} as well as $\hat V \leq 0$, $\hat V(0) < 0$. Moreover, for $d=1$ we assume that $(1 + |\cdot|^{\eps})V\in L^1(\R^1)$ for some $\eps > 0$. Finally, in case that $d=2$, we suppose that $V\in L^1(\R^2)$ is radial.
\end{assumption}

\noindent
By Proposition \ref{prop:alphaunique}, this means that, in particular, the minimizer of $\mathcal{F}_0$ is unique (up to a phase) and the associated energy gap at zero temperature \eqref{eq:energygap} is strictly positive, $\Xi > 0$. We are now ready to state our main result. 
\begin{thm}[BCS Universality in one and two dimensions] \label{thm.univ.d=2} 
Let $V$ be as in \Cref{assumption.d=2}. Then the critical temperature $T_c(\lambda)$ (defined in \eqref{eq:Tc}) and the energy gap $\Xi(\lambda)$ (defined in \eqref{eq:energygap}) are strictly positive for all $\lambda >0$ and it holds that
\[
	\lim_{\lambda \to 0} \frac{\Xi(\lambda)}{T_c(\lambda)} = \frac{\pi}{\E^{\gamma}}\,, 
\]
where $\gamma \approx 0.577$ is the Euler-Mascheroni constant.
\end{thm}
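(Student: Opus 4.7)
The plan is to derive matching asymptotic expansions for $T_c(\lambda)$ and $\Xi(\lambda)$ separately in the limit $\lambda\to 0$, and then take their ratio. Concretely, I would show
\[
T_c(\lambda) = A_d\, \mu\, \E^{-1/(\lambda b_V)}(1+o(1))\,, \qquad \Xi(\lambda) = B_d\, \mu\, \E^{-1/(\lambda b_V)}(1+o(1))\,,
\]
where $b_V>0$ is the modulus of the lowest eigenvalue (necessarily negative, since $\hat V \le 0$ and $\hat V(0) < 0$) of a compact operator $\mathcal V_\mu$ on $L^2$ of the Fermi surface $\{p \in \R^d : |p|^2 = \mu\}$ with integral kernel proportional to $\hat V(p-q)$. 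The Fermi surface degenerates to the two-point set $\{\pm\sqrt\mu\}$ in $d=1$, so $\mathcal V_\mu$ is really a $2\times 2$ matrix, while in $d=2$ it is a circle on which radial symmetry of $V$ turns $\mathcal V_\mu$ into a convolution operator easily diagonalized by Fourier series. The theorem will then follow immediately from the identity $B_d/A_d = \pi/\E^\gamma$, which will turn out to be dimension-independent.

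For $T_c(\lambda)$, by \Cref{thm:lincrit} the critical temperature is the largest $T$ at which $K_T + \lambda V$ has $0$ as its lowest eigenvalue. I would reformulate this via the Birman--Schwinger principle: factoring $V = U |V|^{1/2}\cdot |V|^{1/2}$ with $U = \sgn V$, the condition becomes that $-1$ is an eigenvalue of the compact operator $\lambda|V|^{1/2}K_T^{-1}U|V|^{1/2}$. The leading behavior as $T\to 0$ is driven by the logarithmic singularity of $K_T(p)^{-1} = \tanh((p^2-\mu)/(2T))/(p^2-\mu)$ along the Fermi surface. Following the strategy of Frank, Hainzl, Naboko and Seiringer, one splits $K_T^{-1} = L_T + M_T$ where $L_T$, to leading order, is a rank--one operator proportional to the restriction-to-the-Fermi-surface map multiplied by $\log(\mu/T)$, and $M_T$ is uniformly bounded as $T\to 0$. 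Substituting into the Birman--Schwinger equation and projecting onto the ground eigenfunction of $\mathcal V_\mu$ yields $\lambda b_V \log(\mu/T_c)\cdot(1+o(1)) = 1$, from which the desired expansion of $T_c(\lambda)$ follows after tracking the subleading constants (in which the Euler--Mascheroni constant enters through an elementary logarithmic integral).

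For $\Xi(\lambda)$, I would perform an analogous analysis of the zero-temperature gap equation \eqref{eqn.bcs.gap.eqn}. By \Cref{prop:alphaunique}, the minimizer $\alpha_\lambda$ of $\mathcal F_0$ is unique up to a phase, $\hat\alpha_\lambda > 0$, and hence $\Delta_\lambda = -2\lambda \widehat{V\alpha_\lambda} > 0$. The gap equation can be recast, again via a Birman--Schwinger-type manipulation, as a fixed-point equation whose kernel $1/E_{\Delta_\lambda}(p)$ has the same logarithmic singularity structure along the Fermi surface as $K_T^{-1}$, with $\log(\mu/\Xi)$ replacing $\log(\mu/T)$ but a different subleading constant. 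The difference of the two subleading constants is precisely the value of a dimension-independent elementary integral whose exponential equals $\pi/\E^\gamma$. Repeating the Birman--Schwinger analysis yields the expansion for $\Xi(\lambda)$ with the \emph{same} exponential factor $\E^{-1/(\lambda b_V)}$ as for $T_c$, but with a prefactor $B_d$ satisfying $B_d/A_d = \pi/\E^\gamma$ in every $d \in \{1,2\}$.

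The main technical obstacle is the uniform control of the bounded remainder $M_T$ (and its zero-temperature analogue) and the justification that $\hat V$ can be meaningfully restricted to the Fermi surface as a compact integral operator. This is what dictates the auxiliary assumptions in \Cref{assumption.d=2}: in $d=1$, the condition $(1+|\cdot|^\eps) V \in L^1(\R)$ ensures that $\hat V$ is H\"older continuous, so pointwise evaluation at $\pm\sqrt\mu$ is meaningful and the error in replacing $\hat V(p-q)$ by its Fermi-surface restriction has the required quantitative decay; in $d=2$, radial symmetry combined with $V \in L^1(\R^2)$ makes $\hat V$ continuous and reduces $\mathcal V_\mu$ to a convolution on $S^1_{\sqrt\mu}$ whose compactness and spectral properties are transparent. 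Once these ingredients are in place, the remaining work is essentially bookkeeping along the lines of the three-dimensional proofs \cite{Hainzl.Seiringer.2008, frank.hainzl.naboko.seiringer}.
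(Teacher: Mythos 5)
Your plan follows the same route as the paper: separate second-order Birman--Schwinger asymptotics for $T_c$ and $\Xi$ obtained by splitting the resolvent into a leading Fermi-surface part plus a uniformly bounded remainder, followed by a direct logarithmic-integral computation, and finally a ratio in which all $V$-dependent pieces cancel. One caveat worth being explicit about: your claimed form $T_c = A_d\,\mu\,\E^{-1/(\lambda b_V)}(1+o(1))$ with a genuine constant $A_d$ does \emph{not} follow from the leading-order relation $\lambda b_V \log(\mu/T_c)(1+o(1)) = 1$ alone -- the constant prefactor unavoidably contains a contribution $\exp\bigl(\langle u,\mathcal W_\mu^{(d)}u\rangle/(e_\mu^{(d)})^2\bigr)$ coming from the second-order term $\lambda^2\,\mathfrak F_\mu^{(d)}VM_TV(\mathfrak F_\mu^{(d)})^\dagger$ in the Birman--Schwinger expansion (the paper's $\mathcal W_\mu^{(d)}$ and the $\lambda^2$-part of $b_\mu^{(d)}(\lambda)$), in addition to the elementary logarithmic integral you mention; the same factor appears in $B_d$ and cancels in the ratio, so the final universal constant is unaffected, but your proof would need to establish this cancellation explicitly rather than only tracking the $\gamma$-integral.
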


\noindent
To prove the universality, we separately establish asymptotic formulas for $T_c$ (see Theorem~\ref{thm:asympt12d Tc}) and $\Xi$ (see Theorem \ref{thm:asympt12d Xi}), valid to second order, and compare them by taking their ratio. 
The asymptotic formula for $T_c$ is valid under weaker conditions on $V$ than Assumption~\ref{assumption.d=2}, because we do not need uniqueness of $\Delta$.
To obtain the asymptotic formulas, we first introduce two self-adjoint operators $\mathcal{V}_\mu^{(d)}$ and $\mathcal{W}_\mu^{(d)}$ mapping $L^{2}(\Sph^{d-1}) \to L^{2}(\Sph^{d-1})$ and as such measuring the strength of the interaction $\hat{V}$ on the (rescaled) Fermi surface (see \cite{Hainzl.Seiringer.2008, Henheik.2022, Henheik.Lauritsen.2022}). 
To assure that $\mathcal{V}_\mu^{(d)}$ and $\mathcal{W}_\mu^{(d)}$ will be well-defined and compact, we assume the following.
\begin{assumption}\label{assumption.W} Let $V$ satisfy Assumption \ref{ass:basic}. Additionally, assume that for $d=1$, $\big(1 + (\ln(1 + |\cdot|))^2\big) V \in L^1(\R^1)$ and for $d=2$, $V \in L^1(\R^2)$.
\end{assumption}

\noindent
First, in order to capture the strength to leading order, we define $\mathcal{V}_\mu^{(d)}$ via
\[
  (\mathcal{V}_\mu^{(d)} u)(p) = \frac{1}{(2\pi)^{d/2}} \int_{\Sph^{d-1}} \hat V(\sqrt{\mu}(p-q)) u(q) \ud \omega(q)\,, 
\]
where $\ud\omega$ is the Lebesgue measure on $\Sph^{d-1}$. Since $V \in L^1(\R^d)$, we have that $\hat{V}$ is a bounded continuous function and hence $\mathcal{V}_\mu^{(d)}$ is a Hilbert-Schmidt operator (in fact, trace class with trace being equal to $(2 \pi)^{-d}\vert \Sph^{d-1} \vert  \int_{\R^d} V(x) \D x$). Therefore, its lowest eigenvalue $e_\mu^{(d)} := \inf \spec \mathcal{V}_\mu^{(d)}$ satisfies $e_\mu^{(d)} \le 0$ and it is strictly negative if e.g. $\int V <0$ as in Assumption~\ref{assumption.d=2}.

Second, in order to capture the strength of $\hat{V}$ to next to leading order, we define the operator $\mathcal{W}_\mu^{(d)}$ via its quadratic form 
\[
\begin{aligned}
  & \longip{u}{\mathcal{W}_\mu^{(d)}}{u}
  \\ & \quad = \mu^{d/2-1} 
    \left[\int_{|p| <\sqrt{2}} \frac{1}{|p^2 - 1|} \left(|\psi(\sqrt{\mu}p)|^2 - |\psi(\sqrt{\mu}p/|p|)|^2\right) \ud p 
    + \int_{|p| > \sqrt{2}} \frac{1}{|p^2 - 1|} |\psi(\sqrt{\mu}p)|^2 \ud p
    \right],
\end{aligned}
\]
where $\psi(p) = \frac{1}{(2\pi)^{d/2}} \int_{\Sph^{d-1}} \hat V(p - \sqrt{\mu}q) u(q) \ud \omega(q)$ and $u \in L^2(\Sph^{d-1})$. 
The proof of the following proposition shall be given in Section \ref{sec:pf_Wmu}.
\begin{prop} \label{prop:Wmu}
Let $d \in \set{1,2}$ and let $V$ satisfy Assumption~\ref{assumption.W}.
The operator $\mathcal{W}_\mu^{(d)}$ is well-defined and Hilbert-Schmidt. 
\end{prop}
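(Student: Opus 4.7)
The plan is to polarise the given quadratic form to obtain an explicit integral kernel $W(\omega,\omega')$ for $\mathcal{W}_\mu^{(d)}$ on $L^2(\Sph^{d-1})$, and to show the operator is Hilbert--Schmidt by controlling this kernel in $L^2(\Sph^{d-1}\times\Sph^{d-1})$. After passing to polar coordinates $p=r\sigma$ in $\R^d$, the quadratic form splits cleanly into an \emph{interior} radial part over $r\in(0,\sqrt 2)$ containing the Fermi-surface subtraction $|\psi_u(\sqrt\mu\, r\sigma)|^2-|\psi_u(\sqrt\mu\,\sigma)|^2$, and an \emph{exterior} radial part over $r>\sqrt 2$ where $|r^2-1|$ is bounded below by $r^2/2$. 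The two radial pieces are then treated separately.

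For the interior part, the task is to convert the Fermi-surface vanishing of the subtraction into an integrable bound against $1/|r^2-1|\sim 1/(2|r-1|)$, which calls for a quantitative modulus of continuity $\omega_V$ of $\hat V$ near the unit sphere. From the identity $\hat V(p)-\hat V(q)=(2\pi)^{-d/2}\int V(x)(\E^{-\I p\cdot x}-\E^{-\I q\cdot x})\ud x$ and the elementary bound $|\E^{-\I p\cdot x}-\E^{-\I q\cdot x}|\leq \min(2,|p-q||x|)$, splitting the $x$-integral at a cutoff $R$ gives $|\hat V(p)-\hat V(q)|\lesssim |p-q|\,R\,\|V\|_1+\int_{|x|>R}|V(x)|\ud x$, after which one optimises in $R$. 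In $d=1$ the logarithmic weight in \Cref{assumption.W} delivers $\int_{|x|>R}|V(x)|\ud x\lesssim (\ln R)^{-2}$, yielding $\omega_V(t)\leq C/(\ln(1/t))^2$ after optimisation, which is exactly strong enough to make $\omega_V(t)/t$ integrable on $(0,1)$ via the substitution $s=\ln(1/t)$ that reduces the integral to a convergent $\int^\infty s^{-2}\ud s$. In $d=2$ the tail control is instead extracted from the combination $V\in L^1\cap L^{p_V}$ with $p_V>1$ coming from Assumptions~\ref{ass:basic}--\ref{assumption.W}, via a H\"older interpolation producing a polynomial modulus $\omega_V(t)\lesssim t^\alpha$ for some $\alpha>0$ that is even easier to integrate.

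For the exterior part, in $d=1$ the crude bound $\|\hat V\|_\infty^2\int_{\sqrt 2}^\infty r^{-2}\ud r<\infty$ already suffices, while in $d=2$ the radial measure $r\ud r$ makes the exterior integrand borderline logarithmically divergent under the trivial estimate, forcing the use of honest decay of $\hat V$: Hausdorff--Young applied to $V\in L^{p_V}(\R^2)$ places $\hat V$ in $L^{p_V'}(\R^2)$ with $p_V'<\infty$, which combined with H\"older's inequality against $1/|p^2-1|$ on $\{|p|>\sqrt 2\}$ gives a finite uniform bound on the exterior contribution to the kernel. Putting the interior and exterior bounds together yields a uniform pointwise estimate on $W(\omega,\omega')$, hence $W\in L^\infty(\Sph^{d-1}\times\Sph^{d-1})\subset L^2(\Sph^{d-1}\times\Sph^{d-1})$, proving simultaneously the well-definedness of the quadratic form and the Hilbert--Schmidt property of $\mathcal{W}_\mu^{(d)}$.

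I expect the main obstacle to lie in the $d=2$ case, where neither a quantitative modulus of continuity of $\hat V$ nor the decay required by the exterior integral is available from $V\in L^1(\R^2)$ alone, and both must be teased out of the slightly stronger integrability $V\in L^{p_V}$, $p_V>1$, via a careful H\"older/Hausdorff--Young interpolation. By contrast, the $d=1$ case is much simpler since $\Sph^0$ consists of only two points so $L^2(\Sph^0)=\C^2$, and the proposition reduces to the well-definedness of finitely many scalar integrals, for which the logarithmic weight in \Cref{assumption.W} is exactly tailored.
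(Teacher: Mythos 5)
Your overall strategy --- polarise the quadratic form to get an integral kernel $W(q,q')$ on $\Sph^{d-1}\times\Sph^{d-1}$ and bound it pointwise --- is sound, and the $d=1$ analysis (where the logarithmic weight $(1+(\ln(1+|\cdot|))^2)V\in L^1$ delivers $\omega_V(t)\lesssim(\ln(1/t))^{-2}$, which is Dini-integrable) works. The gap is in $d=2$, exactly where you flag the main obstacle. You claim that $V\in L^1\cap L^{p_V}(\R^2)$ with $p_V>1$ yields, via H\"older/Hausdorff--Young interpolation, a polynomial modulus of continuity $\omega_V(t)\lesssim t^\alpha$ (or at least a Dini modulus) for $\hat V$. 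This is false: $V\in L^1\cap L^{p_V}$ gives $\hat V$ continuous (Riemann--Lebesgue) and $\hat V\in L^{p_V'}$ (Hausdorff--Young), but it gives no quantitative rate of continuity whatsoever. The split-at-$R$ argument you use in $d=1$ needs a rate of decay for $\int_{|x|>R}|V|\ud x$, and $V\in L^1\cap L^{p_V}$ supplies none --- one can easily build $V\in L^1\cap L^\infty(\R^2)$ whose $L^1$-tail, and hence whose Fourier modulus, decays arbitrarily slowly. A polynomial modulus would require a polynomial weight on $V$, which \Cref{assumption.W} deliberately avoids in $d=2$. Consequently your interior integral $\int_0^{\sqrt 2}\frac{r\,\omega_V(\sqrt\mu|r-1|)}{|r^2-1|}\ud r$ is not controlled.

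The paper avoids this by never asking for a modulus of continuity of $\hat V$ at all. It observes $\mathcal{W}_\mu^{(d)}=\mathfrak{F}_\mu^{(d)} V M_0^{(d)} V (\mathfrak{F}_\mu^{(d)})^\dagger$, bounds the kernel of $\mathcal{W}_\mu^{(d)}$ pointwise by $(2\pi)^{-d}\|V\|_1\,\|V^{1/2}M_0^{(d)}V^{1/2}\|_{\mathrm{HS}}$ via Cauchy--Schwarz, and invokes \Cref{lem.MT.bdd.d=2} (at $T=0$). The $d=2$ case of that lemma is proved in \emph{position} space: the angular $p$-integral produces the Bessel function $J_0(r|x-y|)$, whose H\"older regularity and decay $|J_0(t)|\lesssim t^{-1/2}$ give $|M_0^{(2)}(x,y)|\lesssim 1+|x-y|^{-\lambda}$ for $0<\lambda\le 1/2$, and the diagonal singularity is then absorbed by the Hardy--Littlewood--Sobolev inequality using $V\in L^p$ with $1<p\le 4/3$. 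This is where the extra $L^{p_V}$ integrability actually gets used. To repair your argument along your own lines, you should perform the angular $\sigma$-integral before attempting any pointwise bound: it converts the Fermi-surface subtraction into $\iint V(x)V(y)\,\E^{-\I\sqrt\mu(q\cdot x-q'\cdot y)}\,M_0^{(d)}(x,y)\ud x\ud y$, after which the Bessel-function and HLS estimates of \Cref{lem.MT.bdd.d=2} apply directly, with no need for regularity of $\hat V$.
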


\noindent
Next, we define the self-adjoint Hilbert-Schmidt operator 
$$\mathcal{B}_\mu^{(d)}(\lambda) :=\frac{\pi }{2} \left(\lambda \mathcal{V}_\mu^{(d)} - \lambda^2 \mathcal{W}_\mu^{(d)}\right)$$
 on $L^2(\Sph^{d-1})$ and its ground state energy
\begin{equation} \label{eq:bmudef}
b_\mu^{(d)}(\lambda) := \inf\spec\left( \mathcal{B}_\mu^{(d)} (\lambda)   \right).
\end{equation}
Note that if $e_\mu^{(d)}< 0$, then also $b_\mu^{(d)}(\lambda) <0$ for small enough $\lambda$. After these preparatory definitions, we are ready to state the separate asymptotic formulas for the critical temperature and the energy gap in one and two dimensions, which immediately imply Theorem \ref{thm.univ.d=2}.

\begin{thm}[Critical Temperature for $d=1,2$] \label{thm:asympt12d Tc} 
Let $\mu > 0$. Let $V$ satisfy Assumption~\ref{assumption.W} and additionally $e_\mu^{(d)}< 0 $. Then the critical temperature $T_c$, given in Definition \ref{def:Tc}, is strictly positive and satisfies
\begin{equation*}
\lim\limits_{\lambda \to 0} \left(  \ln\left(\frac{\mu}{T_c(\lambda)}\right) + \frac{\pi }{2 \, \mu^{d/2 - 1} \, b_\mu^{(d)}(\lambda)}\right) = - \gamma - \ln \left( \frac{2 c_d}{\pi } \right)\,, 
\end{equation*}
where $\gamma$ denotes the Euler-Mascheroni constant and $c_1=\frac{4}{1+\sqrt{2}}$ and $c_2=1$.
\end{thm}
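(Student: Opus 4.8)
The plan is to adapt to $d=1,2$ the Birman--Schwinger analysis of \cite{Hainzl.Seiringer.2008, frank.hainzl.naboko.seiringer}: reformulate the criticality condition via the Birman--Schwinger principle, establish a sharp $T\to 0$ asymptotic expansion of the resulting operator, and finally reduce, Feshbach-style, to $L^2(\Sph^{d-1})$, where $\mathcal V_\mu^{(d)}$ and $\mathcal W_\mu^{(d)}$ act. For the first step, since $K_T(p)$ is increasing in $T$ and $K_T\ge 2T$, the set $\{T>0:K_T+\lambda V\ge 0\}$ equals $[T_c(\lambda),\infty)$ with $T_c(\lambda)<\infty$, and by \Cref{def:Tc} this is the critical temperature. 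Factoring $K_T+\lambda V=K_T^{1/2}\big(1+\lambda K_T^{-1/2}VK_T^{-1/2}\big)K_T^{1/2}$, one has $K_T+\lambda V\ge 0$ iff $\lambda\inf\spec B_T\ge -1$, where the Birman--Schwinger operator $B_T:=|V|^{1/2}\,\tfrac1{K_T}\,|V|^{1/2}\,\sgn(V)$ has real spectrum with $\inf\spec B_T=\inf\spec\big(K_T^{-1/2}VK_T^{-1/2}\big)$; under \Cref{assumption.W}, $B_T$ is Hilbert--Schmidt (by Young's and Hölder's inequalities applied to its kernel $|V(x)|^{1/2}k_T(x-y)|V(y)|^{1/2}$, $k_T$ being the inverse Fourier transform of $1/K_T$, using $V\in L^1$ when $d=2$), hence compact with $T\mapsto\inf\spec B_T$ continuous. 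Consequently, provided $T_c(\lambda)>0$ (verified in the next step), it is characterised by $\lambda\inf\spec B_{T_c(\lambda)}=-1$.

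The technical heart is the second step: I would prove that, with $c_1=\tfrac{4}{1+\sqrt 2}$, $c_2=1$ and
\[
  \mathfrak m_\mu(T):=\mu^{d/2-1}\Big(\ln\tfrac{\mu}{T}+\gamma+\ln\tfrac{2c_d}{\pi}\Big)\xrightarrow[T\to0]{}+\infty,
\]
one has the decomposition $B_T=\mathfrak m_\mu(T)\,X^*X\,\sgn(V)+\mathcal R_T$, where $X:=\mathfrak F_\mu|V|^{1/2}\colon L^2(\R^d)\to L^2(\Sph^{d-1})$, $\mathfrak F_\mu$ is the restriction of the Fourier transform to the rescaled Fermi sphere $\sqrt\mu\,\Sph^{d-1}$ (so $X$ is Hilbert--Schmidt since $V\in L^1$), and $\|\mathcal R_T-\mathcal R_0\|_{\mathrm{HS}}\to 0$ as $T\to0$ with $\mathcal R_0=|V|^{1/2}R_0|V|^{1/2}\sgn(V)$, $R_0$ being an appropriately regularised version of the multiplication operator $1/|p^2-\mu|$. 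Two algebraic identities then enter: $X\,\sgn(V)\,X^*=\mathfrak F_\mu V\mathfrak F_\mu^*=\mathcal V_\mu^{(d)}$, so that $X^*X\sgn(V)$ has the same nonzero spectrum as $\mathcal V_\mu^{(d)}$ --- in particular the isolated lowest eigenvalue $e_\mu^{(d)}<0$ --- and $X\,\sgn(V)\,\mathcal R_0\,X^*=\mathfrak F_\mu V R_0 V\mathfrak F_\mu^*=\mathcal W_\mu^{(d)}$, which up to the regularisation of the Fermi-surface singularity is precisely the content of the definition of $\mathcal W_\mu^{(d)}$ together with \Cref{prop:Wmu}. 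Proving the decomposition amounts to (a) splitting $1/K_T(p)$ into a part concentrated on $\{p^2=\mu\}$, carrying the logarithmic divergence, plus a remainder converging (after the $|V|^{1/2}$-sandwich) to $R_0$; (b) evaluating the finite part of the radial integral to obtain $c_d$; (c) controlling $\mathcal R_T-\mathcal R_0$ in Hilbert--Schmidt norm under \Cref{assumption.W}. As a byproduct, $\inf\spec B_T=\mathfrak m_\mu(T)e_\mu^{(d)}+O(1)\to-\infty$, which gives $T_c(\lambda)>0$ for every $\lambda>0$.

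For the third step, for $\lambda$ small $1+\lambda\mathcal R_T$ is invertible, so
\[
  1+\lambda B_T=(1+\lambda\mathcal R_T)\Big(1+\lambda\mathfrak m_\mu(T)\,(1+\lambda\mathcal R_T)^{-1}X^*X\sgn(V)\Big),
\]
and since $AB$ and $BA$ share their nonzero spectrum, $1+\lambda B_T$ is singular iff $-\tfrac1{\lambda\mathfrak m_\mu(T)}$ is an eigenvalue of the compact operator $X\,\sgn(V)\,(1+\lambda\mathcal R_T)^{-1}X^*$ on $L^2(\Sph^{d-1})$. At $T=T_c(\lambda)$ the relevant eigenvalue equals $\tfrac{\inf\spec B_{T_c(\lambda)}}{\mathfrak m_\mu(T_c(\lambda))}$, which by the crude estimate $\inf\spec B_T=\mathfrak m_\mu(T)e_\mu^{(d)}+O(1)$ of the second step tends to $e_\mu^{(d)}=\inf\spec\mathcal V_\mu^{(d)}$ and is hence, for $\lambda$ small, the lowest one. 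Expanding $(1+\lambda\mathcal R_T)^{-1}=1-\lambda\mathcal R_T+O(\lambda^2)$ and inserting $\mathcal R_{T_c(\lambda)}\to\mathcal R_0$ (valid since $T_c(\lambda)\to 0$), the two identities give
\[
  X\,\sgn(V)\,(1+\lambda\mathcal R_{T_c(\lambda)})^{-1}X^*=\mathcal V_\mu^{(d)}-\lambda\,\mathcal W_\mu^{(d)}+o(\lambda)\quad\text{in }\|\cdot\|_{\mathrm{HS}},
\]
so its lowest eigenvalue equals $\inf\spec\big(\mathcal V_\mu^{(d)}-\lambda\mathcal W_\mu^{(d)}\big)+o(\lambda)=\tfrac2{\pi\lambda}\,b_\mu^{(d)}(\lambda)+o(\lambda)$, using $\mathcal B_\mu^{(d)}(\lambda)=\tfrac{\pi\lambda}{2}\big(\mathcal V_\mu^{(d)}-\lambda\mathcal W_\mu^{(d)}\big)$. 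Equating with $-\tfrac1{\lambda\mathfrak m_\mu(T_c(\lambda))}$ and solving --- using $b_\mu^{(d)}(\lambda)<0$ and $b_\mu^{(d)}(\lambda)=\tfrac\pi2\lambda e_\mu^{(d)}+O(\lambda^2)$, so that the $o(\lambda)$ error becomes an $o(1)$ error --- yields $\mathfrak m_\mu(T_c(\lambda))=-\tfrac\pi{2\,b_\mu^{(d)}(\lambda)}+o(1)$; dividing by $\mu^{d/2-1}$ and rearranging is exactly $\lim_{\lambda\to0}\big(\ln\tfrac\mu{T_c(\lambda)}+\tfrac\pi{2\mu^{d/2-1}b_\mu^{(d)}(\lambda)}\big)=-\gamma-\ln\tfrac{2c_d}{\pi}$.

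The hard part is the second step: obtaining the $T\to 0$ expansion of $B_T$ with genuine second-order ($o(1)$) precision, with the \emph{correct} dimension-dependent constant $c_d$, and controlled in Hilbert--Schmidt norm under only the weak integrability of \Cref{assumption.W}. For $d=1$ the delicate point is the finite part of $\int_0^\infty\big(\tfrac1{K_T(p)}-(\text{local model})\big)w(p)\,\D p$, which picks up contributions from both sides of the Fermi point $|p|=\sqrt\mu$ --- hence the cutoff $|p^2-\mu|=\mu$, i.e.\ $|p|=\sqrt 2$, visible in the definition of $\mathcal W_\mu^{(d)}$ --- and is sensitive to the low-momentum behaviour of $\hat V$, which is what forces the $(1+(\ln(1+|\cdot|))^2)$-weight and produces $c_1=\tfrac4{1+\sqrt 2}$. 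For $d=2$ the density of states on the Fermi surface is constant, which gives $c_2=1$ and removes the $\mu$-power from $\mathfrak m_\mu$, but one still has to identify the subleading operator with $\mathcal W_\mu^{(d)}$, prove \Cref{prop:Wmu}, and run the Birman--Schwinger reduction for potentials merely in $L^{p_V}$ with $p_V\in(1,\infty)$. Everything else --- the first step, and the perturbative bookkeeping of the third --- is routine once the expansion is in hand.
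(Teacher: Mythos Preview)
Your proposal is correct and follows essentially the same route as the paper: Birman--Schwinger reduction, decomposition of $B_T$ into a divergent Fermi-surface piece plus a uniformly bounded remainder (the paper's \Cref{lem.MT.bdd.d=2}), the isospectral/Feshbach reduction to $L^2(\Sph^{d-1})$, and the perturbative identification of the second-order term with $\mathcal W_\mu^{(d)}$. The only cosmetic difference is that the paper uses the exact integral $m_\mu^{(d)}(T)=\tfrac{1}{|\Sph^{d-1}|}\int_{|p|<\sqrt{2\mu}}K_T(p)^{-1}\ud p$ as the coefficient in the decomposition and computes its $T\to 0$ asymptotics separately (Step~5), whereas you build the asymptotic constant $\gamma+\ln(2c_d/\pi)$ into $\mathfrak m_\mu(T)$ from the outset and absorb the $o(1)$ difference into $\mathcal R_T$; this changes nothing of substance.
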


\noindent
Here, the Assumptions on $V$ are weaker than Assumption~\ref{assumption.d=2}, since $\hat V(0)<0$ implies that $e_\mu^{(d)}<0$.
We thus have the asymptotic behavior
\begin{equation*}
T_c(\lambda) = 2 c_d \, \frac{\E^{\gamma}}{\pi} \, \big(1 + o(1)\big) \mu \, \E^{\pi/(2 \mu^{d/2- 1} b_\mu^{(d)}(\lambda))}
\end{equation*}
in the limit of small $\lambda$.

\begin{remark}\label{rmk.HS10}
\Cref{thm:asympt12d Tc} is essentially a special case of \cite[Theorem 2]{hainzl.seiringer.2010}. 
We give the proof here for two main reasons. 
\begin{itemize}
\item[(i)] There is still some work required to translate the statement of \cite[Theorem 2]{hainzl.seiringer.2010} into a form 
in which it is comparable to that of \Cref{thm:asympt12d Xi} (in order to prove \Cref{thm.univ.d=2}). 
The main difficulty is that the operator $\mathcal{W}_\mu^{(d)}$ in \cite{hainzl.seiringer.2010} is only defined via a limit, 
\cite[Equation (2.10)]{hainzl.seiringer.2010}.
\item[(ii)] The goal of this paper is to give an exemplary proof of \Cref{thm:asympt12d Tc} 
in order to compare it to the proofs of the similar statements in the literature
concerning the asymptotic behavior of the critical temperature in various limits \cite{hainzl.seiringer.scat.length,Hainzl.Seiringer.2008,Henheik.2022}.
\end{itemize}
\end{remark}

\Cref{thm:asympt12d Tc} is complemented by the following asymptotics for the energy gap. 
\begin{thm}[Energy Gap for $d=1,2$] \label{thm:asympt12d Xi} 
Let $V$ satisfy Assumption \ref{assumption.d=2} and let $\mu > 0$. Then there exists a unique radially symmetric minimizer (up to a constant phase) of the BCS functional \eqref{eq:functional} at temperature $T=0$. The associated energy gap $\Xi$, given in \eqref{eq:energygap}, is strictly positive and satisfies
\begin{equation*}
	\lim\limits_{\lambda \to 0} \left(  \ln\left(\frac{\mu}{\Xi}\right) + \frac{\pi }{2 \, \mu^{d/2 - 1} \, b_\mu^{(d)}(\lambda)}\right) = - \ln ( 2 c_d )\,, 
\end{equation*}
where $b_\mu^{(d)}$ is defined in \eqref{eq:bmudef} and $c_1=\frac{4}{1+\sqrt{2}}$ and $c_2=1$.
\end{thm}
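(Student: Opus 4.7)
My plan is to adapt the weak-coupling strategy of Hainzl--Seiringer~\cite{Hainzl.Seiringer.2008} from $d=3$ to $d=1,2$, closely paralleling the proof of \Cref{thm:asympt12d Tc} so that the universal ratio $\pi/\E^{\gamma}$ emerges by direct comparison of the two asymptotic formulas. First I would establish that the minimizer of $\mathcal{F}_0$ is radial. By \Cref{prop:alphaunique}, the minimizer is unique up to a constant phase, which can be chosen so that $\hat{\alpha}>0$. Radiality then follows from rotational invariance of $\mathcal{F}_0$ (using that $V$ is radial --- trivial for $d=1$, part of \Cref{assumption.d=2} for $d=2$): for any rotation $R$, $\alpha\circ R$ is again a minimizer, hence equals $\alpha$ up to a phase, which the condition $\hat{\alpha}>0$ forces to be trivial. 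Consequently $\Delta(p)=\Delta(|p|)$ depends only on $|p|$.

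The heart of the proof is an asymptotic analysis of the $T=0$ gap equation, equivalent to the operator identity $(E_\Delta + \lambda V)\alpha = 0$ with $E_\Delta(p)=\sqrt{(p^2-\mu)^2+\Delta(p)^2}$ acting as a Fourier multiplier. By the Birman--Schwinger principle, this is in turn equivalent to $-\lambda^{-1}$ being the lowest eigenvalue of the compact self-adjoint operator $\mathcal{A}_\Delta := \sgn(V)\,|V|^{1/2} E_\Delta^{-1}|V|^{1/2}$ on $L^2(\R^d)$. Setting $\Delta_0 := \Delta(\sqrt{\mu})$ and anticipating $\Delta_0\to 0$ exponentially as $\lambda\to 0$, the multiplier $E_\Delta^{-1}$ develops a logarithmic singularity concentrated on the Fermi sphere $\{|p|=\sqrt{\mu}\}$. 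A careful decomposition of $\mathcal{A}_\Delta$ --- using the coarea splitting $\ud q = r^{d-1}\ud r\,\ud\omega$, the approximation $\hat{V}(p-q)\approx \hat{V}(\sqrt{\mu}(p/|p|-q/|q|))$ near $|p|,|q|\approx\sqrt{\mu}$, and the one-variable asymptotic
\[
	\int_{\sqrt{\mu}/2}^{\,2\sqrt{\mu}}\frac{r^{d-1}\,\ud r}{\sqrt{(r^2-\mu)^2 + \Delta_0^2}} = \mu^{d/2-1}\bigl(\ln(\mu/\Delta_0) + \ln(2 c_d) + o(1)\bigr),
\]
which accounts for the dimensional constants $c_1 = 4/(1+\sqrt{2})$ and $c_2 = 1$ --- should then yield
\[
	\mathcal{A}_\Delta = \mu^{d/2-1}\bigl(\ln(\mu/\Delta_0)+\ln(2c_d)\bigr)\,\mathfrak{F}^{\ast}\mathcal{V}_\mu^{(d)}\mathfrak{F} \;-\; \mu^{d/2-1}\,\mathfrak{F}^{\ast}\mathcal{W}_\mu^{(d)}\mathfrak{F} + o(1),
\]
in Hilbert--Schmidt norm as $\lambda\to 0$, with $\mathfrak{F}$ a suitable isometry from $L^{2}(\Sph^{d-1})$ into $L^{2}(\R^{d})$.

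Combining this expansion with $\inf\spec \mathcal{A}_\Delta = -\lambda^{-1}$ and the definition $\mathcal{B}_\mu^{(d)}(\lambda)=\tfrac{\pi}{2}(\lambda \mathcal{V}_\mu^{(d)}-\lambda^{2}\mathcal{W}_\mu^{(d)})$, standard perturbation theory for compact self-adjoint operators yields
\[
	\mu^{d/2-1}\bigl(\ln(\mu/\Delta_0) + \ln(2c_d)\bigr)\,\tfrac{2}{\pi}\,b_\mu^{(d)}(\lambda) = -1 + o(1),
\]
which rearranges to the claimed asymptotic for $\Delta_0$. It then remains to identify $\Xi$ with $\Delta_0$ at leading order: since $E_\Delta(p)\geq |p^{2}-\mu|$ is large for $|p|$ away from $\sqrt{\mu}$, the infimum in $\Xi=\inf_p E_\Delta(p)$ is attained near the Fermi surface, and a uniform modulus-of-continuity estimate for $\Delta/\Delta_0$ in a neighborhood of $|p|=\sqrt{\mu}$ (derived from the gap equation, using the moment assumption $(1+|\cdot|^{\eps})V\in L^{1}$ in $d=1$ from \Cref{assumption.d=2}) gives $\Xi = \Delta_0(1+o(1))$, and hence $\ln(\mu/\Xi)=\ln(\mu/\Delta_0)+o(1)$.

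The hardest step will be the precise asymptotic expansion of $\mathcal{A}_\Delta$ and the identification of the subleading operator as exactly $\mathcal{W}_\mu^{(d)}$ with the correct additive constant $\ln(2c_d)$. This requires separating on- and off-Fermi-sphere contributions to $\int \hat{V}(p-q)\Delta(q)/E_\Delta(q)\,\ud q$, replacing the full gap profile $\Delta(q)$ by its Fermi-surface trace $\Delta_0\varphi(q/|q|)$ (with $\varphi$ the ground state of $\mathcal{B}_\mu^{(d)}(\lambda)$), and controlling all resulting remainders. In $d=1,2$ the logarithmic integrals are more delicate than in $d=3$, and the additional moment assumptions in \Cref{assumption.d=2} are precisely what ensures that $\hat{V}$ has the regularity near the Fermi surface needed to make the expansion effective and that the operator $\mathcal{W}_\mu^{(d)}$ is Hilbert--Schmidt as per \Cref{prop:Wmu}.
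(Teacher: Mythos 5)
Your proposal follows essentially the same route as the paper's proof (which itself adapts the weak-coupling strategy of Hainzl--Seiringer to $d=1,2$): radiality of the minimizer from uniqueness and rotational invariance, the Birman--Schwinger reduction, a decomposition of the Birman--Schwinger operator into a logarithmically divergent Fermi-surface piece plus a uniformly bounded remainder, first-order perturbation theory to extract $b_\mu^{(d)}(\lambda)$ at second order, an a priori H\"older-type bound on $\Delta/\Delta(\sqrt{\mu})$ to justify replacing $\Delta(p)$ by $\Delta(\sqrt{\mu})$ in the singular integral, and finally $\Xi = \Delta(\sqrt{\mu})(1+o(1))$ from that same continuity estimate. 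Two small imprecisions are worth flagging. First, your one-variable integral is stated over $r\in(\sqrt{\mu}/2,\,2\sqrt{\mu})$, but the constant $\ln(2c_d)$ is tied to the specific cut-off $|p|<\sqrt{2\mu}$ used in the definition of $m_\mu^{(d)}$ (cf.\ \eqref{mDelta}); with your domain the constant would come out as $\ln 3$ rather than $\ln 2$ for $d=2$. Second, claiming an operator expansion of $\mathcal{A}_\Delta$ to $o(1)$ in Hilbert--Schmidt norm is stronger than needed and not quite what the paper proves: the paper instead factors $1+B_\Delta^{(d)}$ as in \eqref{eqn.decom.BDelta.d=2}, passes to the isospectral sphere operator $T_\Delta^{(d)}$, and then only needs convergence of the single quadratic form $\longip{u}{\mathfrak{F}_\mu^{(d)}VM_\Delta^{(d)}V(\mathfrak{F}_\mu^{(d)})^\dagger}{u} \to \longip{u}{\mathcal{W}_\mu^{(d)}}{u}$ at the (unique, constant) ground state $u$, which follows from dominated convergence once one knows $\Delta(p)\to 0$ pointwise. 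Both points are correctable and do not change the essential architecture of the argument.
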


\noindent
In other words, we have the asymptotic behavior
\begin{equation*}
\Xi(\lambda) = 2 c_d \,  \big(1 + o(1)\big) \mu \, \E^{\pi/(2 \mu^{d/2- 1} b_\mu^{(d)}(\lambda))}
\end{equation*}
in the limit of small $\lambda$.
Now, \Cref{thm.univ.d=2} follows immediately from \Cref{thm:asympt12d Tc,thm:asympt12d Xi}.

\begin{remark}[Other limits in dimensions $d=1,2$]
Similarly to the presented results, one could also consider the limits of low and high density, i.e.~$\mu\to 0$ and $\mu\to \infty$, respectively.
We expect that also here the universality $\frac{\Xi}{T_c} \approx \frac{\pi}{\E^{\gamma}}$ holds.
Indeed, one would expect that the proofs of BCS universality in dimension $d=3$ should carry over to one and two dimensions with some minor technical modifications. 
Note that, even for the (technically less demanding) case of a weak coupling limit, which we consider here, 
there are still some technical details that are different in dimensions $d=1,2$ compared to dimension $d=3$.
Hence, it is not a trivial matter to generalize the arguments of 
\cite{hainzl.seiringer.scat.length,lauritsen.energy.gap.2021,Henheik.2022,Henheik.Lauritsen.2022} to one and two dimensions.
Moreover, for the case of low density, there is even an issue of what exactly low density \emph{means} in dimensions one and two: In three spatial dimensions \cite{hainzl.seiringer.scat.length, lauritsen.energy.gap.2021}, the asymptotic formulas for $T_c$ and $\Xi$ were obtained for potentials $V$ with \emph{negative} scattering length but not creating bound states for the Laplacian. This latter condition ensures that $\mu \to 0$ actually corresponds to the limit of low density. However, in spatial dimensions one and two, attractive potentials, no matter how weak, always give rise to bound states of $-\nabla^2 + V$, see \cite{simon.1976}. Thus for $\mu = 0$ the particle density is non-zero. We will not deal with the low- and high-density limits here.
\end{remark}

\noindent
The rest of the paper is devoted to proving Theorem \ref{thm:asympt12d Tc} and Theorem~\ref{thm:asympt12d Xi}.

\section{Proofs} \label{sec:proofs} 
The overall structure of our proofs is as follows:
The principal idea is to derive two different formulas for each of the two integrals 
\begin{equation}\label{mT}
  m_{\mu}^{(d)}(T) := \frac{1}{|\Sph^{d-1}|} \int_{|p| < \sqrt{2\mu}} \frac{1}{K_T(p)} \ud p
\end{equation}
and
\begin{equation} \label{mDelta}
  m_\mu^{(d)}(\Delta) := \frac{1}{|\Sph^{d-1}|} \int_{|p| < \sqrt{2\mu}} \frac{1}{E_\Delta(p)} \ud p.
\end{equation}
The first set of formulas is derived by studying the Birman-Schwinger operators 
\[
  B_T^{(d)} := \lambda V^{1/2} K_{T}^{-1} |V|^{1/2}\,
   \quad {\rm and} \quad  
  B_\Delta^{(d)} := \lambda V^{1/2} E_\Delta^{-1} |V|^{1/2}\,, 
\]
associated to the Schrödinger type operators $K_T + \lambda V$ and $E_\Delta + \lambda V$, respectively. 
In particular, spectral properties of these unbounded Schr\"odinger type operators naturally translate to their associated Birman-Schwinger operators, which are compact and as such much simpler to analyze. 
The second set of formulas is obtained by just calculating the integrals $m_{\mu}^{(d)}$ directly.

Indeed, for the critical temperature we obtain the following asymptotics, which, by combining them, immediately prove Theorem \ref{thm:asympt12d Tc}. 
\begin{prop}\label{lem.asym.Tc.d=2} 
Let $\mu > 0$. Let $V$ satisfy Assumption~\ref{assumption.W} and additionally $e_\mu^{(d)}< 0 $. Then, the critical temperature $T_c$ is positive and, as $\lambda \to 0$, we have that
\begin{align*}
m_\mu^{(d)}(T_c)  &= - \frac{\pi}{2 b_\mu^{(d)}(\lambda) } + o(1) \,, \\[1mm]
  m_{\mu}^{(d)}(T_c) &=  \mu^{d/2-1} \left( \ln \left(\frac{\mu}{T_c}\right) +\gamma+\ln\left(\frac{2c_d}{\pi}\right) +o(1)\right)\,.
\end{align*}
\end{prop}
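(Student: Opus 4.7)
I will prove the two identities independently and then combine them. The first rests on the Birman--Schwinger principle (third item of Theorem~\ref{thm:lincrit}) together with a singular--regular decomposition of $V^{1/2}K_{T_c}^{-1}|V|^{1/2}$ concentrated near the Fermi sphere $\{|p|=\sqrt{\mu}\}$, following the strategy of \cite{Hainzl.Seiringer.2008}. The second identity is a direct asymptotic evaluation of the integral \eqref{mT} as $T\to 0$. The vanishing $T_c(\lambda)\to 0$ needed to apply the second identity at $T = T_c$ follows a posteriori from the first: since $b_\mu^{(d)}(\lambda) = \tfrac{\pi}{2}\lambda e_\mu^{(d)}(1 + O(\lambda)) \to 0^-$, the first identity forces $m_\mu^{(d)}(T_c)\to+\infty$, which is impossible unless $T_c\to 0$.

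\textbf{Birman--Schwinger analysis.} By Definition~\ref{def:Tc} and Theorem~\ref{thm:lincrit}, at $T = T_c$ the lowest spectral point of the compact operator $B_{T_c}^{(d)} = \lambda V^{1/2}K_{T_c}^{-1}|V|^{1/2}$ equals $-1$. Introduce the Fermi sphere trace $(\mathfrak{F}_\mu f)(\omega) := \hat{f}(\sqrt{\mu}\,\omega)$ and set $A := V^{1/2}\mathfrak{F}_\mu^*$, $B := \mathfrak{F}_\mu|V|^{1/2}$; both are Hilbert--Schmidt under Assumption~\ref{assumption.W} since $\hat{V}$ is bounded and continuous on the compact Fermi sphere. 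A direct kernel computation gives $BA = \mathfrak{F}_\mu V \mathfrak{F}_\mu^* = \mathcal{V}_\mu^{(d)}$, so $AB = V^{1/2}\mathfrak{F}_\mu^*\mathfrak{F}_\mu|V|^{1/2}$ is isospectral (on nonzero eigenvalues) to $\mathcal{V}_\mu^{(d)}$ on $L^2(\Sph^{d-1})$. The plan is to decompose
\[
  V^{1/2}K_T^{-1}|V|^{1/2} = m_\mu^{(d)}(T)\,AB + R_T^{(d)}
\]
and to show that $R_T^{(d)}$ converges, as $T\to 0$, in Hilbert--Schmidt norm to an operator whose isospectral pullback realises the quadratic form of $-\mathcal{W}_\mu^{(d)}$ defined before Proposition~\ref{prop:Wmu}. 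The subtraction $|\psi(\sqrt{\mu}p)|^2 - |\psi(\sqrt{\mu}p/|p|)|^2$ appearing there is precisely the renormalisation by which the leading divergent prefactor $m_\mu^{(d)}(T)$ has been extracted from $\int K_T^{-1}|\psi|^2$. Since $m_\mu^{(d)}(T_c)\to\infty$, the eigenvalue equation $B_{T_c}^{(d)} u = -u$ then reduces, modulo $o(1)$, to $m_\mu^{(d)}(T_c)(\lambda\mathcal{V}_\mu^{(d)} - \lambda^2 \mathcal{W}_\mu^{(d)})u = -u$ on $L^2(\Sph^{d-1})$, and taking the smallest eigenvalue yields $m_\mu^{(d)}(T_c) \cdot \tfrac{2}{\pi}b_\mu^{(d)}(\lambda) = -1 + o(1)$, which is the first identity.

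\textbf{Direct integration.} Passing to spherical coordinates and substituting $u = p^2/\mu$,
\[
  m_\mu^{(d)}(T) = \frac{\mu^{d/2-1}}{2}\int_0^2 u^{d/2 - 1}\,\frac{\tanh\bigl((u-1)\mu/(2T)\bigr)}{u - 1}\,du.
\]
Write $u^{d/2 - 1} = 1 + (u^{d/2 - 1} - 1)$. The ``$1$''-term, after the substitution $v = (u-1)\mu/(2T)$ and symmetrisation, equals $\mu^{d/2-1}\int_0^{\mu/(2T)}\tanh(v)/v\, dv$, whose classical asymptotic $\ln X + \gamma + \ln(4/\pi) + o(1)$ as $X\to\infty$ contributes $\mu^{d/2-1}\bigl(\ln(\mu/T) + \gamma + \ln(2/\pi) + o(1)\bigr)$. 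The remainder has integrand regular at $u = 1$ and, by dominated convergence, converges as $T\to 0$ to $\tfrac{\mu^{d/2-1}}{2}\int_0^2 (u^{d/2-1} - 1)\sgn(u - 1)/|u - 1|\,du$. For $d = 2$ this vanishes identically, giving $c_2 = 1$; for $d = 1$ the substitutions $t = \sqrt{1\pm(u-1)}$ evaluate it explicitly to $\mu^{-1/2}\ln(4/(1+\sqrt{2}))$, yielding $c_1 = 4/(1+\sqrt{2})$. Summing both contributions gives the second identity.

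\textbf{Main obstacle.} The technically most delicate step is the Hilbert--Schmidt convergence of $R_T^{(d)}$ with a genuine $o(1)$ error relative to the large prefactor $m_\mu^{(d)}(T_c)$. Assumption~\ref{assumption.W}---the $L^1$-hypothesis in $d = 2$ and the logarithmic moment $(1+(\ln(1+|\cdot|))^2)V\in L^1$ in $d = 1$---is exactly what makes $\mathcal{W}_\mu^{(d)}$ Hilbert--Schmidt (Proposition~\ref{prop:Wmu}) and supplies the integrable majorants needed for the dominated convergence argument on $R_T^{(d)}$. The lower-dimensional setting is more sensitive than $d = 3$ because the rescaled Fermi sphere $\Sph^{d-1}$ is smaller and $\hat{V}$ enjoys less decay along it, which explains the stronger moment hypothesis required in $d = 1$.
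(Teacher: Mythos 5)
Your proposal follows essentially the same route as the paper's: decompose the Birman--Schwinger operator as $\lambda m_\mu^{(d)}(T)\,V^{1/2}(\mathfrak{F}_\mu^{(d)})^\dagger\mathfrak{F}_\mu^{(d)}|V|^{1/2} + \lambda V^{1/2}M_T^{(d)}|V|^{1/2}$, factor and expand the geometric series to reduce the eigenvalue problem to an operator on $L^2(\Sph^{d-1})$, extract $b_\mu^{(d)}(\lambda)$ by first--order perturbation theory, and compute $m_\mu^{(d)}(T)$ directly via the $\int_0^X \tanh(v)/v\,\ud v$ asymptotics. One point, though, needs reordering: you assert that $T_c\to 0$ ``follows a posteriori from the first'' identity, but the second--order part of that identity --- replacing $\mathfrak{F}_\mu^{(d)} V M_{T_c}^{(d)} V (\mathfrak{F}_\mu^{(d)})^\dagger$ by $\mathcal{W}_\mu^{(d)}$ via dominated convergence, using $1/K_T\nearrow 1/|p^2-\mu|$ --- already presupposes $T_c\to 0$, so as written the argument is circular. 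The paper avoids this by first running only the \emph{first--order} expansion, which requires nothing beyond the uniform Hilbert--Schmidt bound $\|V^{1/2}M_T^{(d)}|V|^{1/2}\|_{\rm HS}\le C$ of \Cref{lem.MT.bdd.d=2} (and in particular no $T\to 0$ limit of that operator); this yields $m_\mu^{(d)}(T_c)=-(\lambda e_\mu^{(d)})^{-1}(1+O(\lambda))\to\infty$, hence $T_c\to 0$, and only then invokes dominated convergence for the second--order term and for the integral asymptotics. You should also make explicit that the reduced eigenvalue equation is an \emph{infimum--of--spectrum} statement treated by first--order perturbation around the ground state $u$ of $\mathcal{V}_\mu^{(d)}$, rather than a fixed eigenfunction identity. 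Finally, a small typo in the limit integrand: $\tanh((u-1)\mu/(2T))/(u-1)\to \sgn(u-1)/(u-1)=1/|u-1|$, not $\sgn(u-1)/|u-1|$; the value $c_1=4/(1+\sqrt{2})$ you report is nonetheless correct. The genuinely technical piece --- proving \Cref{lem.MT.bdd.d=2} --- you correctly flag as the main obstacle but leave unproved, which is where the dimension--specific hypotheses of \Cref{assumption.W} actually enter.
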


\noindent
For the energy gap we obtain the following asymptotics, which, again by combining them, immediately prove Theorem \ref{thm:asympt12d Xi}. 
\begin{prop}\label{lem.asym.Delta.d=2}
Let $V$ satisfy Assumption \ref{assumption.d=2} and let $\mu > 0$. Then (by Proposition \ref{prop:alphaunique}) we have a strictly positive radially symmetric gap function $\Delta$ and associated energy gap $\Xi$, which, as $\lambda \to 0$, satisfy the asymptotics
\begin{equation*}
  \begin{aligned}
  \Xi & = \Delta(\sqrt{\mu})\big(1 + o(1)\big)
  \\
  m_\mu^{(d)}(\Delta) & = - \frac{\pi}{2 b_\mu^{(d)}(\lambda)}+ o(1)
  \\
  m_{\mu}^{(d)}(\Delta) & = \mu^{d/2-1} \left(\ln \left(\frac{\mu }{ \Delta(\sqrt{\mu})}\right) + \ln (2c_d) +o(1)\right)
  \end{aligned}
\end{equation*}
\end{prop}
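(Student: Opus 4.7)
The plan is to follow the Birman--Schwinger strategy used by Hainzl--Seiringer in $d=3$ \cite{Hainzl.Seiringer.2008} and refined in \cite{lauritsen.energy.gap.2021}, adapted to the lower-dimensional operators $\mathcal{V}_\mu^{(d)}$ and $\mathcal{W}_\mu^{(d)}$ introduced above. The overall architecture mirrors that of Proposition~\ref{lem.asym.Tc.d=2}, but with $K_T$ replaced by the self-consistent Fourier multiplier $E_\Delta$.

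For the second identity, I would start from the $T=0$ gap equation in the form $(1+\lambda E_\Delta^{-1}V)\alpha=0$, which, by the standard symmetrization $\varphi = V_{\mathrm{sgn}}^{1/2}\alpha$ with $V_{\mathrm{sgn}}^{1/2}=\sgn(V)|V|^{1/2}$, is equivalent to $-1$ being an eigenvalue of the compact operator $B_\Delta^{(d)}=\lambda V^{1/2}E_\Delta^{-1}|V|^{1/2}$. I would then decompose $E_\Delta^{-1}$ into a singular piece, which upon restriction to the (rescaled) Fermi sphere through an appropriate co-area/trace operator reproduces $m_\mu^{(d)}(\Delta)\cdot \mathcal{V}_\mu^{(d)}$ to leading order, plus a bounded remainder whose leading contribution reconstructs $-\mathcal{W}_\mu^{(d)}$ to next order (the cancellation structure in the definition of $\mathcal{W}_\mu^{(d)}$ on $|p|<\sqrt{2}$ is precisely designed to make this work). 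Rearranging the eigenvalue condition on the ground state subspace of $\mathcal{B}_\mu^{(d)}(\lambda)$ then yields $m_\mu^{(d)}(\Delta) = -\pi/(2b_\mu^{(d)}(\lambda))+o(1)$.

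For the third identity, I would compute $m_\mu^{(d)}(\Delta)$ directly. Using the radial symmetry of $\Delta$ (which is a consequence of the uniqueness statement in \Cref{prop:alphaunique} together with rotation invariance of $\mathcal{F}_0$), switch to the variable $s=p^2-\mu$ and split the integral into a neighborhood of the Fermi surface, where $\Delta(p)\approx\Delta(\sqrt{\mu})$ by continuity plus the uniform convergence of $\Delta/\Delta(\sqrt{\mu})$ towards a ground-state eigenfunction of $\mathcal{V}_\mu^{(d)}$ obtained in the Birman--Schwinger step, and a complementary region where the integrand is harmless. The near-Fermi integral produces the logarithm $\ln(\mu/\Delta(\sqrt{\mu}))$, and the matching of boundary terms with the cut-off $\sqrt{2\mu}$ in \eqref{mDelta} produces the dimension-dependent constant $\ln(2c_d)$, with $c_1=4/(1+\sqrt{2})$ and $c_2=1$. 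For the first identity $\Xi=\Delta(\sqrt{\mu})(1+o(1))$, since $E_\Delta(p)^2=(p^2-\mu)^2+|\Delta(p)|^2$, one has $\Xi\le E_\Delta(\sqrt{\mu})=\Delta(\sqrt{\mu})$; the matching lower bound follows from the same uniform control of $\Delta$ near the Fermi surface, ensuring that the minimizer of $E_\Delta$ is driven to $|p|=\sqrt{\mu}$ as $\lambda\to 0$.

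The main obstacle is the self-consistency: unlike $K_T$, the multiplier $E_\Delta$ depends on the unknown $\Delta$, so that the Birman--Schwinger decomposition requires \emph{a priori} control on $|\Delta(p)|$, roughly of the form $\sup_p \Delta(p)\ll \mu$ together with H\"older-type continuity of $\Delta/\Delta(\sqrt{\mu})$ on the Fermi surface. This must be obtained via a bootstrap: rough upper bounds on $\|\Delta\|_\infty$ in terms of $\lambda\|\hat V\|_\infty$ and the eigenvalue equation for $\mathcal{V}_\mu^{(d)}$ feed into a refined analysis that promotes pointwise convergence to uniform convergence. In $d=1,2$, the weaker decay of the free kernels and the different Hilbert--Schmidt norm bounds (which is the reason for the sharper integrability conditions in Assumption~\ref{assumption.d=2}, most notably $(1+|\cdot|^\varepsilon)V\in L^1$ in $d=1$) require these estimates to be redone; once they are in place, the rest of the argument parallels the three-dimensional case.
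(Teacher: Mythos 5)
Your proposal follows essentially the same strategy as the paper: Birman--Schwinger decomposition of $B_\Delta^{(d)}$ into the singular Fermi-surface piece $m_\mu^{(d)}(\Delta)\,\mathcal{V}_\mu^{(d)}$ and a bounded remainder producing $\mathcal{W}_\mu^{(d)}$, a priori H\"older control of $\Delta(p)$ in units of $\Delta(\sqrt{\mu})$, direct evaluation of the integral $m_\mu^{(d)}(\Delta)$, and a two-sided comparison of $\Xi$ with $\Delta(\sqrt{\mu})$. One simplification worth flagging: the paper does not need the bootstrap you sketch for the a priori control of $\Delta$. The uniform Hilbert--Schmidt bound on $V^{1/2}M_\Delta^{(d)}|V|^{1/2}$ follows from the deterministic inequality $E_\Delta(p)\ge|p^2-\mu|$, with no input about $\Delta$ itself, and the Birman--Schwinger eigenfunction equation then directly yields the pointwise representation $\Delta(p)=f(\lambda)\bigl(\int_{\Sph^{d-1}}\hat V(p-\sqrt{\mu}q)\,\ud\omega(q)+\lambda\eta_\lambda(p)\bigr)$ with $\|\eta_\lambda\|_\infty$ bounded uniformly in $\lambda$, from which both $\Delta(p)\le C\Delta(\sqrt{\mu})$ and the H\"older estimate follow in a single pass rather than by iteration.
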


\noindent
With a slight abuse of notation, using radiality of $\Delta$, we wrote $\Delta(\sqrt{\mu})$ instead of $\Delta(\sqrt{\mu}\hat{p})$ for some $\hat{p} \in \Sph^{d-1}$.

In the remainder of this paper, where we give the proofs of Propositions \ref{lem.asym.Tc.d=2} and \ref{lem.asym.Delta.d=2}, we shall frequently use the notation $\mathfrak{F}_\mu^{(d)}: L^1(\R^d)\to L^2(\Sph^{d-1})$ for the (scaled) Fourier transform restricted to the (rescaled) Fermi sphere,
\[
\left(  \mathfrak{F}_\mu^{(d)}\psi \right)(p) : = \frac{1}{(2\pi)^{d/2}} \int_{\R^d} \psi(x) e^{-i\sqrt{\mu}p\cdot x} \ud x\,. 
\]
Note that for an $L^1$-function, pointwise values of its Fourier transform are well-defined by the Riemann--Lebesgue lemma. 
(In particular the restriction to a co--dimension $1$ manifold of a sphere is well-defined.) 
\begin{rmk}
In \cite{cuenin.merz.2021}, Cuenin and Merz use the Tomas-Stein theorem to define 
$\mathfrak{F}_\mu^{(d)}$ on a larger space than $L^1(\R^d)$.	
With this they are able to prove a general version of \Cref{thm:asympt12d Tc} under slightly weaker conditions on $V$. However, we do not pursue this here, see \Cref{rmk.HS10}.
\end{rmk}

\subsection{Proof of Proposition \ref{lem.asym.Tc.d=2}}
\begin{proof}[Proof of \Cref{lem.asym.Tc.d=2}] The argument is divided into several steps. 
	\\[2mm]
	\textbf{1. A priori spectral information on $K_{T_c}+\lambda V$.} First note that, due to Theorem \ref{thm:lincrit} and Definition \ref{def:Tc}, the critical temperature $T_c$ is determined by the lowest eigenvalue of $K_T + \lambda V$ being $0$ exactly for $T=T_c$.
\\[2mm]
\textbf{2. Birman-Schwinger principle.}
Next, we employ the Birman-Schwinger principle, which says that the compact Birman-Schwinger operator
$B_T^{(d)} = \lambda V^{1/2} K_{T}^{-1} |V|^{1/2}$ (denoting $V(x)^{1/2} = \sgn(V(x)) |V(x)|^{1/2}$) has $-1$ as its lowest eigenvalue exactly for $T=T_c$, see \cite{frank.hainzl.naboko.seiringer,Hainzl.Seiringer.2008}.

Using the notation for the Fourier transform restricted to the rescaled Fermi sphere introduced above, we now decompose the Birman-Schwinger operator as
\[
  B_T^{(d)} = \lambda m_\mu^{(d)}(T) V^{1/2} (\mathfrak{F}_\mu^{(d)})^\dagger \mathfrak{F}_\mu^{(d)} |V|^{1/2} + \lambda V^{1/2} M_T^{(d)} |V|^{1/2},
\]
where $M_T^{(d)}$ is defined through the integral kernel
\begin{equation}\label{M_T_kernel}
	M_T^{(d)}(x,y) = \frac{1}{(2\pi)^{d}} \left[\int_{|p| < \sqrt{2\mu}} \frac{1}{K_T(p)} \left(e^{ip\cdot(x-y)} - e^{i\sqrt{\mu}p/|p| \cdot (x-y)}\right) \ud p
	+ \int_{|p| >\sqrt{2\mu}} \frac{1}{K_T} e^{ip\cdot(x-y)} \ud p\right].
\end{equation}
We claim that $V^{1/2} M_T^{(d)} |V|^{1/2}$ is uniformly bounded. 
\begin{lemma}\label{lem.MT.bdd.d=2}
Let $\mu > 0$. Let $V$ satisfy Assumption \ref{assumption.W}. Then we have for all $T\geq 0$
\[
  \norm{V^{1/2} M_T^{(d)} |V|^{1/2}}_{\textnormal{HS}} \leq C\,,
\]
where $C> 0$ denotes some positive constant and $\Vert \cdot \Vert_{\rm HS}$ is the Hilbert-Schmidt norm. 
\end{lemma}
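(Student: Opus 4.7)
The plan is to exploit the translation invariance of $M_T^{(d)}$: its integral kernel is of the form $M_T^{(d)}(x,y) = m_T(x-y)$ with $m_T$ read off from \eqref{M_T_kernel}, so
\[
\|V^{1/2} M_T^{(d)} |V|^{1/2}\|_{\rm HS}^{2}
= \iint |V(x)||V(y)|\,|m_T(x-y)|^{2}\,\D x \,\D y \, .
\]
It therefore suffices to produce a $T$-independent function $F(z)$ with $|m_T(z)| \le F(z)$ for all $T\ge 0$, such that $\iint |V(x)||V(y)| F(x-y)^{2}\,\D x\,\D y < \infty$ under \Cref{assumption.W}.

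I would split $m_T(z) = m_T^{\rm L}(z) + m_T^{\rm H}(z)$ according to the two integrals in \eqref{M_T_kernel}, over $\{|p|<\sqrt{2\mu}\}$ and $\{|p|>\sqrt{2\mu}\}$, respectively. For $m_T^{\rm L}$, the subtraction $e^{ip\cdot z}-e^{i\sqrt{\mu}p/|p|\cdot z}$ vanishes on the Fermi sphere $\{|p|=\sqrt{\mu}\}$, cancelling the singularity of $K_T^{-1}$ there. Passing to polar coordinates $p=r\hat p$ and factoring the exponentials as $e^{i\sqrt{\mu}\hat p\cdot z}(e^{i(r-\sqrt{\mu})\hat p\cdot z}-1)$, one combines the Lipschitz/boundedness estimate $|e^{i(r-\sqrt{\mu})\hat p\cdot z}-1| \le \min(2,|r-\sqrt{\mu}||z|)$ with the $T$-independent pointwise bound $K_T(p)^{-1}\le |p^{2}-\mu|^{-1}$ and cancels the factor $|r-\sqrt{\mu}|$ against $|p^{2}-\mu|=|r-\sqrt{\mu}|(r+\sqrt{\mu})$. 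For large $|z|$, stationary-phase / Bessel-function asymptotics for the angular integrals $\int_{\Sph^{d-1}} e^{ir\hat p\cdot z}\D\omega(\hat p)$ supply the necessary decay.

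For $m_T^{\rm H}$ I would use $K_T(p)^{-1}\le (p^{2}-\mu)^{-1}$ on $\{|p|>\sqrt{2\mu}\}$, which is again $T$-independent. In $d=1$ the function $\chi_{|p|>\sqrt{2\mu}}(p^{2}-\mu)^{-1}$ is in $L^{1}(\R)$, yielding a uniform constant bound on $|m_T^{\rm H}(z)|$. In $d=2$ this fails, but exploiting the $e^{ip\cdot z}$ oscillation (equivalently, Bessel asymptotics together with splitting the radial integral at $r\sim 1/|z|$) one obtains a uniform bound of order $1+|\ln|z||$, bounded for $|z|\to\infty$ and logarithmic at $|z|\to 0$. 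Combining the two parts gives a $T$-independent majorant $F(z)$ with $F(z)^{2} \lesssim 1 + \ln^{2}(1+|z|)$ in $d=1$ and $F(z)^{2} \lesssim 1 + \ln^{2}(1/|z|)\chi_{|z|<1}$ in $d=2$. The double integral $\iint|V(x)||V(y)|F(x-y)^{2}\D x\,\D y$ is then finite by \Cref{assumption.W}: in $d=1$ the logarithmic weight $(1+(\ln(1+|\cdot|))^{2})V\in L^{1}$ absorbs the growth of $F$ at infinity via $\ln(1+|x-y|)\le\ln(1+|x|)+\ln(1+|y|)$; in $d=2$ the logarithmic singularity at the origin is absorbed by $V\in L^{1}\cap L^{p_V}$ for some $p_V>1$, since $\ln^{2}|z|$ is locally in every $L^{q}$ with $q<\infty$.

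The main obstacle is the uniform-in-$T$ pointwise bound on $m_T(z)$. The naive estimate $K_T^{-1}\le |p^{2}-\mu|^{-1}$ is not integrable in $p$ (not near the Fermi sphere for the low part, not at infinity for the high part in $d=2$), so a genuine cancellation argument is unavoidable in both pieces: the Fermi-sphere subtraction for $m_T^{\rm L}$ and the oscillation of $e^{ip\cdot z}$ for $m_T^{\rm H}$ in $d=2$. The delicate point is that these cancellations need to be quantified sharply enough that the resulting majorant $F$ matches the integrability weights on $V$ imposed by \Cref{assumption.W}.
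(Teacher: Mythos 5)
Your approach is essentially the paper's: bound the convolution kernel $m_T(z)$ pointwise by a $T$-independent majorant via $K_T^{-1}\le|p^2-\mu|^{-1}$, split into the $\{|p|<\sqrt{2\mu}\}$ and $\{|p|>\sqrt{2\mu}\}$ pieces, exploit the Fermi-sphere subtraction in the low piece and Bessel asymptotics in $d=2$, then integrate the majorant against $|V(x)||V(y)|$. (The paper interpolates the Bessel bounds to a power-law majorant $|z|^{-\lambda}$ and applies Hardy--Littlewood--Sobolev, whereas you keep the sharper logarithmic singularity and argue by Young; both work.)

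One step in your sketch needs more care in $d=2$, and it is precisely the spot where the proof could fail under \Cref{assumption.W}. After the angular integration the low piece is $\int_0^{\sqrt{2\mu}}\frac{r}{|r^2-\mu|}\bigl|J_0(r|z|)-J_0(\sqrt{\mu}|z|)\bigr|\,\ud r$, and you propose to treat it in two regimes: the scalar bound $|e^{i(r-\sqrt{\mu})\hat p\cdot z}-1|\le\min\{2,|r-\sqrt{\mu}|\,|z|\}$ for small $|z|$, and Bessel decay $|J_0(u)|\lesssim u^{-1/2}$ for large $|z|$. Used separately these do not close: the scalar bound produces $\log(1+|z|)$ growth at infinity, and the plain decay bound does not cancel the $|r-\sqrt{\mu}|^{-1}$ singularity. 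In $d=1$ the resulting $\log(1+|z|)$ growth is harmless because \Cref{assumption.W} explicitly requires $(1+(\ln(1+|\cdot|))^2)V\in L^1(\R)$, but for $d=2$ the assumption is only $V\in L^1(\R^2)\cap L^{p_V}(\R^2)$, with no logarithmic weight, so a majorant that grows at infinity would overshoot the hypotheses. What is needed (and what the paper uses) is a single interpolated estimate such as $|J_0(x)-J_0(y)|\lesssim|x-y|^{1/3}(x^{-1/3}+y^{-1/3})$, which simultaneously cancels the Fermi-sphere singularity and is uniform in $|z|$; with it the low piece is bounded outright. Your closing remark that ``the cancellations need to be quantified sharply enough'' identifies exactly this issue, so this is a matter of filling in the right inequality rather than repairing the strategy.
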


\noindent
Armed with this bound, we have that for sufficiently small $\lambda$ that 
$1 + \lambda V^{1/2} M_T^{(d)} |V|^{1/2}$ is invertible, and hence
\[
  1 + B_T^{(d)} = (1 + \lambda V^{1/2} M_T^{(d)} |V|^{1/2})
    \left(1 + \frac{\lambda m_\mu^{(d)}(T)}{1 + \lambda V^{1/2} M_T^{(d)} |V|^{1/2}} V^{1/2} (\mathfrak{F}_\mu^{(d)})^\dagger \mathfrak{F}_\mu^{(d)} |V|^{1/2} \right)\,.
\]
Thus, the fact that $B_T^{(d)}$ has lowest eigenvalue $-1$ at $T=T_c$ is equivalent to 
\begin{equation} \label{eq:TcStep2}
  \lambda m_\mu^{(d)}(T)\mathfrak{F}_\mu^{(d)} |V|^{1/2} \frac{1}{1 + \lambda V^{1/2} M_T^{(d)} |V|^{1/2}} V^{1/2} (\mathfrak{F}_\mu^{(d)})^\dagger 
\end{equation}
having lowest eigenvalue $-1$, again at $T= T_c$, as it is isospectral to the rightmost operator on the right-hand-side above. 
(Recall that for bounded operators $A,B$, the operators $AB$ and $BA$ have the same spectrum apart from possibly at $0$. However, in our case, both operators are compact on an infinite dimensional space and hence $0$ is in both spectra.)

We now prove \Cref{lem.MT.bdd.d=2}.
\begin{proof}[Proof of \Cref{lem.MT.bdd.d=2}]
	We want to bound the integral kernel \eqref{M_T_kernel} of $M_T^{(d)}$ uniformly in $T$. Hence, we will bound $K_T \geq |p^2 - \mu|$. 
	The computation is slightly different in $d=1$ and $d=2$, so we do them separately.  
	\\[2mm]
	\underline{$d=1$.}
	The second integral in \eqref{M_T_kernel} is bounded by
	\[
	2 \int_{\vert p \vert>\sqrt{2\mu}} \frac{1}{\vert p^2-\mu \vert}\ud  p = \frac{2\arcoth \sqrt{2}}{\sqrt{\mu}}.
	\]
	For the first integral, we use that $\vert e^{ix}-e^{iy} \vert \leq \min\{\vert x- y \vert ,2\}$, $\vert p^2-\mu\vert \geq \sqrt{\mu}\vert \vert p \vert -\sqrt{\mu} \vert$, and increase the domain of integration to obtain the bound
	\begin{multline*}
		\frac{2}{\sqrt{\mu}}\int_{0}^{2\sqrt{\mu}} \frac{\min\left\{\vert \vert p-\sqrt{\mu}\vert \vert x-y\vert ,2\right\}}{\vert p -\sqrt{\mu} \vert}\ud p 
		= \frac{8}{ \sqrt{\mu}}\left[1+\ln\left(\max\left\{\frac{\vert x-y \vert \sqrt{\mu}}{2},1\right\}\right)\right] \\
		\leq \frac{8}{ \sqrt{\mu}}(1+\ln(1+\sqrt{\mu} \max\{\vert x\vert, \vert y\vert\}).
	\end{multline*}
	We conclude that $|M_T^{(1)}(x,y)|\lesssim \frac{1}{\sqrt{\mu}}( 1 +\ln(1+\sqrt{\mu} \max\{\vert x\vert, \vert y\vert\}))$.
	Hence, 
	\[
	\norm{V^{1/2}M_T^{(1)}|V|^{1/2}}_{\textnormal{HS}}^2 \lesssim \frac{1}{\mu}\left(\lVert V \rVert_{L^1(\R)}^2+ \lVert V\rVert_{L^1(\R)}\int_{\R} |V(x)|(1+\ln(1+\sqrt{\mu} \vert x\vert))^2 \ud x \right).
	\]
	\\[2mm]
	\underline{$d=2$}. 
	We first compute the angular integral. Note that 
	$\int_{\Sph^1} e^{ipx} \ud \omega(p) = 2\pi J_0(|x|)$, where $J_0$ is the zeroth order Bessel function.
	For the second integral in \eqref{M_T_kernel} we may bound $|p^2-\mu| \geq c p^2$. Up to some finite factor, the second integral is hence bounded by 
	\[
	\int_{\sqrt{2\mu}}^\infty \frac{1}{p} \vert J_0(p|x-y|) \vert \ud p \leq C\int_{\sqrt{2\mu}}^\infty \frac{1}{p^{1+\lambda}} |x-y|^{-\lambda} \ud p  \leq  C_\lambda |x-y|^{-\lambda},
	\]
	for any $0 < \lambda \leq 1/2$ since $|J_0(x)| \leq C$ and $\sqrt{x}J_0(x) \leq C$, see e.g.~\cite[(9.55f), (9.57a)]{bronstejn_taschenbuch_2012}.
	For the first integral we get the bound 
	\[  
	\int_0^{\sqrt{2\mu}} \frac{p}{|p^2 - \mu|} \abs{J_0(p|x-y|) - J_0(\sqrt{\mu}|x-y|)} \ud p.
	\]
	Here we use that $J_0$ is Lipschitz, since its derivative $J_{-1}$ is bounded (see e.g.~\cite[(9.55a),(9.55f)]{bronstejn_taschenbuch_2012}), so that  
	\[
	|J_0(x) - J_0(y)| \leq C |x-y|^{1/3} (|J_0(x)| + |J_0(y)|)^{2/3}
	\leq C |x-y|^{1/3} \left(x^{-1/3} + y^{-1/3}\right).
	\]
	That is 
	\[
	\left\vert J_0(p|x-y|) - J_0(\sqrt{\mu}|x-y|)\right\vert \leq C \frac{|p - \sqrt{\mu}|^{1/3}}{p^{1/3} + \sqrt{\mu}^{1/3}}.
	\]
	This shows that the first integral is bounded. We conclude that 
	$|M_T^{(2)}(x,y)|\lesssim 1 + \frac{1}{|x-y|^{\lambda}}$
	for any $0 < \lambda \leq 1/2$.
	Then, by the Hardy--Littlewood--Sobolev inequality \cite[Theorem 4.3]{analysis} we have that 
	\[
	\norm{V^{1/2}M_T^{(2)}|V|^{1/2}}_{\textnormal{HS}}^2 = \iint |V(x)| |M_T^{(2)}(x,y)| |V(y)| \ud x \ud y
	\lesssim \norm{V}_{L^1(\R^2)}^2 + \norm{V}_{L^p(\R^2)}^2
	\]
	for any $1 < p \leq 4/3$.
\end{proof}
\noindent
\textbf{3. First order.}
Evaluating \eqref{eq:TcStep2} at $T=T_c$ and expanding the geometric series to first order we get
\[
\begin{aligned}
  -1 & = \lambda m_\mu^{(d)}(T_c) 
    \inf\spec \left(\mathfrak{F}_\mu^{(d)} |V|^{1/2} \frac{1}{1 + \lambda V^{1/2} M_{T_c}^{(d)} |V|^{1/2}} V^{1/2} (\mathfrak{F}_\mu^{(d)})^\dagger\right)\\[1mm] 
    & = \lambda \, m_\mu^{(d)}(T_c) \,  \inf\spec \mathcal{V}_\mu^{(d)} (1 + O(\lambda)) = \lambda \, m_\mu^{(d)}(T_c) \, e_\mu^{(d)} (1 + O(\lambda))
\end{aligned}
\]
where we used $\mathcal{V}_\mu^{(d)} = \mathfrak{F}_\mu^{(d)} V (\mathfrak{F}_\mu^{(d)})^\dagger$. Since by assumption $e_\mu^{(d)} < 0$, this shows that $m_\mu^{(d)}(T_c) \to \infty$ as $\lambda \to 0$. 
\\[2mm]
\textbf{4. A priori bounds on $T_c$.} By \eqref{mT}, the divergence of $m_\mu^{(d)}$ as $\lambda \to 0$ in particular shows that $T_c/\mu \to 0$ in the limit $\lambda \to 0$.
\\[2mm]
\textbf{5. Calculation of the integral $m_\mu^{(d)}(T_c)$.} This step is very similar to \cite[Lemma 1]{Hainzl.Seiringer.2008} and \cite[Lemma 3.5]{Hainzl.Roos.ea.2022}, where the asymptotics have been computed for slightly different definitions of $m_\mu^{(d)}$ in three and one spatial dimension, respectively.
Integrating over the angular variable and substituting $s=\left\vert \frac{\vert p \vert^2}{\mu}-1\right\vert$, we get
\[
m_{\mu}^{(d)}(T_c) = \mu^{d/2-1}  \int_{0}^{1} \tanh\left(\frac{s}{2(T_c/\mu)}\right)\frac{(1+s)^{d/2-1}+(1-s)^{d/2-1}}{2s}  \ud s.
\]
According to \cite[Lemma 1]{Hainzl.Seiringer.2008}, 
\[ 
\lim_{T_c\downarrow0} \left(\int_0^1 \frac{\tanh\left(\frac{s}{2(T_c/\mu)}\right)}{s}\ud s-\ln \frac{\mu}{T_c}\right)=\gamma-\ln\frac{\pi}{2}.
\]
By monotone convergence, it follows that
\[
m_{\mu}^{(d)}(T_c) =  \mu^{d/2-1} 
	\left[\ln \frac{\mu}{T_c}+\gamma-\ln\frac{\pi}{2}+\int_{0}^{1} \frac{(1-s)^{d/2-1}+(1+s)^{d/2-1}-2}{2s} \ud s +o(1)\right]\,.
\]
The remaining integral equals $\ln c_d$ and we have thus proven the second item in Proposition~\ref{lem.asym.Tc.d=2}.

Combining this with the third step, one immediately sees that the critical temperature vanishes exponentially fast, $T_c \sim e^{1/\lambda e_\mu}$, as $\lambda \to 0$, recalling that $e_\mu^{(d)} < 0$ by assumption.
\\[2mm]
\textbf{6. Second order.}
Now, to show the universality, we need to compute the next order correction. 
To do so, we expand the geometric series in \eqref{eq:TcStep2} and employ first order perturbation theory, yielding that
\begin{equation} \label{eq:TcStep6}
  m_\mu^{(d)}(T_c)
= \frac{-1}{\lambda \longip{u}{\mathfrak{F}_\mu^{(d)} V (\mathfrak{F}_\mu^{(d)})^\dagger}{u} 
	- \lambda^2 \longip{u}{\mathfrak{F}_\mu^{(d)} V M_{T_c}^{(d)} V (\mathfrak{F}_\mu^{(d)})^\dagger}{u} + O(\lambda^3)},
\end{equation}
where $u$ is the (normalized) ground state (eigenstate of lowest eigenvalue) of $\mathfrak{F}_\mu^{(d)} V (\mathfrak{F}_\mu^{(d)})^\dagger$. 
(In case of a degenerate ground state, $u$ is the ground state minimizing the second order term.)

This second order term in the denominator of \eqref{eq:TcStep6} is close to $\mathcal{W}_\mu^{(d)}$. More precisely, it holds that
\begin{equation} \label{eq:Wconv}
  \lim_{\lambda\to 0}\longip{u}{\mathfrak{F}_\mu^{(d)} V M_{T_c}^{(d)} V (\mathfrak{F}_\mu^{(d)})^\dagger}{u} = \longip{u}{\mathcal{W}_\mu^{(d)}}{u}\,,
\end{equation}
which easily follows from dominated convergence, noting that $\frac{1}{K_T}$ increases to $\frac{1}{|p^2 - \mu|}$ as $T\to 0$.
We then conclude that 
\[
  \lim_{\lambda \to 0} \left(m_\mu^{(d)}(T_c) + \frac{\pi }{2 b_\mu^{(d)}(\lambda) }\right) = 0\,,
\]
since $\longip{u}{\lambda \mathcal{V}_\mu^{(d)} - \lambda^2 \mathcal{W}_\mu^{(d)}}{u} 
  = \inf\spec(\lambda \mathcal{V}_\mu^{(d)} - \lambda^2 \mathcal{W}_\mu^{(d)}) + O(\lambda^3)
  = \frac{\pi}{2} b_\mu^{(d)}(\lambda) + O(\lambda^3)$, 
again by first-order perturbation theory. 
This concludes the proof of Proposition~\ref{lem.asym.Tc.d=2}.
\end{proof}

\noindent
We conclude this subsection with several remarks, comparing our proof with those of similar results from the literature.

\begin{remark}[Structure here vs. in earlier papers on $T_c$] \label{rmk:Tccomp}
We compare the structure of our proof to that of the different limits in three dimensions
	\cite{Hainzl.Seiringer.2008, Henheik.2022, hainzl.seiringer.scat.length}:
\begin{itemize}
\item {\bf Weak coupling:}
The structure of the proof we gave here is quite similar to that of \cite{Hainzl.Seiringer.2008}, only they do Steps $5$ and $6$
in the opposite order. Also the leading term for $T_c$ was shown already in \cite{frank.hainzl.naboko.seiringer}, where a computation somewhat similar to Steps $1$--$4$ is given.
\item 	{\bf High denisty:} For $\mu \to \infty$, the structure of the proof  in \cite{Henheik.2022} is slightly different compared to the one given here. This is basically due to the facts that (i) the necessary a priori bound $T_c = o(\mu)$ already requires the Birman-Schwinger decomposition and (ii) the second order requires strengthened assumptions compared to the first order. To conclude, the order of steps in \cite{Henheik.2022} can be thought of as: $1$, $5$, $4$ (establishing $T_c = O(\mu)$), $2$, $3$, $4$ (establishing $T_c = o(\mu)$), $2$ (again), $6$. Here the final step is much more involved than in the other limits considered. 
\item {\bf Low density:}
As above, for the proof of the low density limit in \cite{hainzl.seiringer.scat.length} 
the structure is slightly different. 
One first needs the a priori bound $T_c = o(\mu)$ on the critical temperature
before one uses the Birman-Schwinger principle and decomposes the Birman-Schwinger operator.\footnote{Strictly speaking, in \cite{hainzl.seiringer.scat.length}, it is only proven that $T_c = O(\mu)$ (which is sufficient for applying the Birman-Schwinger principle), while the full $T_c = o(\mu)$ itself requires the Birman-Schwinger decomposition (see \cite[Remark~4.12]{ABLThesis} for details).} 
Also, the decomposition of the Birman-Schwinger operator is again different.
For the full decomposition and analysis of the Birman-Schwinger operator 
one needs also the first-order analysis, that is Step~$2$, which is done in two parts.
The order of the steps in \cite{hainzl.seiringer.scat.length} can then mostly be though of as:
$1$, $4$, $5$, $2$, $3$, $2$ (again), $6$.
\end{itemize}
\end{remark}

\subsection{Proof of \texorpdfstring{\Cref{lem.asym.Delta.d=2}}{Proposition \ref*{lem.asym.Delta.d=2}}}
\begin{proof}[Proof of \Cref{lem.asym.Delta.d=2}]
The structure of the proof is parallel to that of Proposition \ref{lem.asym.Tc.d=2} for the critical temperature.
	\\[2mm]
\textbf{1. A priori spectral information on $E_{\Delta}+\lambda V$.} First, it is proven in \cite[Lemma~2]{Hainzl.Seiringer.2008} that $\mathcal{F}_0$ has a unique minimizer $\alpha$ which has strictly positive Fourier transform. Using radiality of $V$, it immediately follows that this minimizer is rotationally symmetric (since otherwise rotating $\alpha$ would give a different minimizer) and hence also $\Delta = -2 \lambda \hat{V} \star \hat{\alpha}$ is rotation invariant. It directly follows from \cite[(43) and Lemma 3]{Hainzl.Seiringer.2008} that that $E_\Delta + \lambda V$ has lowest eigenvalue $0$, and that the minimizer $\alpha$ is the corresponding eigenfunction. 
\\[2mm]
\textbf{2. Birman-Schwinger principle.}
This implies, by means of the Birman-Schwinger principle, that the Birman-Schwinger operator $B_\Delta^{(d)} = \lambda V^{1/2} E_\Delta^{-1} |V|^{1/2}$ has $-1$ as its lowest eigenvalue. 
As in the proof of Proposition \ref{lem.asym.Tc.d=2}, we decompose it as
\[
  B_\Delta^{(d)} = \lambda m_\mu^{(d)}(\Delta) V^{1/2}(\mathfrak{F}_\mu^{(d)})^\dagger \mathfrak{F}_\mu^{(d)} |V|^{1/2}
    + \lambda V^{1/2} M_\Delta^{(d)} |V|^{1/2}
\]
and prove the second summand to be uniformly bounded.
\begin{lemma}\label{lem.MDelta.bdd.d=2}
	Let $\mu > 0$. Let $V$ satisfy Assumption \ref{assumption.W}.
Then, uniformly in  small $\lambda$, we have 
\[
  \norm{V^{1/2} M_\Delta^{(d)} |V|^{1/2}}_{\textnormal{HS}} \leq C\,. 
\]
\end{lemma}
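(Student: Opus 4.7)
The plan is to observe that the proof of \Cref{lem.MT.bdd.d=2} carries over essentially verbatim, with $K_T$ replaced by $E_\Delta$. The crucial point is that the single property of $K_T$ that was actually used there, namely $K_T(p)\ge |p^2-\mu|$, also holds for $E_\Delta$: directly from the definition $E_\Delta(p) = \sqrt{(p^2-\mu)^2+|\Delta(p)|^2}$ one has $E_\Delta(p)\ge |p^2-\mu|$ pointwise, a bound that is independent of $\Delta$ (and therefore independent of $\lambda$).

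Explicitly, $M_\Delta^{(d)}$ is defined through the integral kernel
\[
  M_\Delta^{(d)}(x,y) = \frac{1}{(2\pi)^d}\left[\int_{|p|<\sqrt{2\mu}}\frac{e^{ip\cdot(x-y)} - e^{i\sqrt{\mu}p/|p|\cdot(x-y)}}{E_\Delta(p)}\ud p + \int_{|p|>\sqrt{2\mu}}\frac{e^{ip\cdot(x-y)}}{E_\Delta(p)}\ud p\right].
\]
Bounding $1/E_\Delta(p)\le 1/|p^2-\mu|$, the same pointwise estimates as in the proof of \Cref{lem.MT.bdd.d=2} give in $d=1$
\[
  |M_\Delta^{(1)}(x,y)|\lesssim \tfrac{1}{\sqrt{\mu}}\bigl(1+\ln(1+\sqrt{\mu}\max\{|x|,|y|\})\bigr),
\]
and in $d=2$, for any $0<\kappa\le 1/2$,
\[
  |M_\Delta^{(2)}(x,y)|\lesssim 1 + |x-y|^{-\kappa}.
\]

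From here the Hilbert--Schmidt norm is controlled exactly as in \Cref{lem.MT.bdd.d=2}. For $d=1$ we integrate against $|V(x)||V(y)|$ and use the hypothesis $(1+(\ln(1+|\cdot|))^2)V\in L^1(\R)$ provided by \Cref{assumption.W}. For $d=2$ we combine the Hardy--Littlewood--Sobolev inequality with $V\in L^1(\R^2)\cap L^p(\R^2)$ for some $1<p\le 4/3$ (furnished by \Cref{ass:basic} together with $V\in L^1$ in \Cref{assumption.W}). Uniformity in small $\lambda$ is automatic, since the resulting pointwise bound on $M_\Delta^{(d)}(x,y)$ does not involve $\Delta$, and hence does not involve $\lambda$.

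The main (and really only) conceptual point is the observation $E_\Delta\ge |p^2-\mu|$ matching the analogous bound $K_T\ge |p^2-\mu|$; once this is noted, no new obstacle appears and the proof is a mechanical transcription of the argument for \Cref{lem.MT.bdd.d=2}.
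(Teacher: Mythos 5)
Your proof is correct and is essentially identical to the paper's: the paper also reduces to \Cref{lem.MT.bdd.d=2} by noting the kernel of $M_\Delta^{(d)}$ and the bound $E_\Delta(p)\ge|p^2-\mu|$, which makes the estimate uniform in $\Delta$ and hence in $\lambda$.
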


\noindent
With this one may similarly factor 
\begin{equation}\label{eqn.decom.BDelta.d=2}
  1 + B_\Delta^{(d)} = (1 + \lambda V^{1/2} M_\Delta^{(d)} |V|^{1/2})
    \left(1 + \frac{\lambda m_\mu^{(d)}(\Delta)}{1 + \lambda V^{1/2} M_\Delta^{(d)} |V|^{1/2}} V^{1/2} (\mathfrak{F}_\mu^{(d)})^\dagger \mathfrak{F}_\mu^{(d)} |V|^{1/2} \right)
\end{equation}
and conclude that
\begin{equation}\label{eqn.TDelta}
  T_\Delta^{(d)} := \lambda m_\mu^{(d)}(\Delta) \mathfrak{F}_\mu^{(d)} |V|^{1/2} \frac{1}{1 + \lambda V^{1/2} M_\Delta^{(d)} |V|^{1/2}} V^{1/2} (\mathfrak{F}_\mu^{(d)})^\dagger
\end{equation}
has lowest eigenvalue $-1$. 
\begin{proof}[Proof of \Cref{lem.MDelta.bdd.d=2}]
	Note that $M_\Delta$ has kernel 
	\[
	M_\Delta(x,y) = \frac{1}{(2\pi)^{d}} \left[\int_{|p| < \sqrt{2\mu}} \frac{1}{E_\Delta(p)} \left(e^{ip\cdot(x-y)} - e^{i\sqrt{\mu}p/|p| \cdot (x-y)}\right) \ud p
	+ \int_{|p| >\sqrt{2\mu}} \frac{1}{E_\Delta(p)} e^{ip\cdot(x-y)} \ud p\right].
	\]
	We may bound this exactly as in the proof of \Cref{lem.MT.bdd.d=2} using that $E_\Delta(p) \geq |p^2 - \mu|$.
\end{proof}
\noindent \textbf{3. First order.}
Expanding the geometric series in \eqref{eqn.TDelta} to first order, we see that 
\[
\begin{aligned}
  -1 & = \lambda m_\mu^{(d)}(\Delta) 
    \inf\spec \left(\mathfrak{F}_\mu^{(d)} |V|^{1/2} \frac{1}{1 + \lambda V^{1/2} M_{\Delta}^{(d)} |V|^{1/2}} V^{1/2} (\mathfrak{F}_\mu^{(d)})^\dagger\right)
  \\ & = \lambda m_\mu^{(d)}(\Delta) \inf\spec \mathcal{V}_\mu^{(d)} (1 + O(\lambda))
  	= \lambda e_\mu^{(d)} m_\mu^{(d)}(\Delta)(1 + O(\lambda)).
\end{aligned}
\]
Hence, in particular, $m_\mu^{(d)}(\Delta) \sim -\frac{1}{\lambda e_\mu^{(d)}} \to \infty$ as $\lambda \to 0$. 
\\[2mm]
\textbf{4. A priori bounds on $\Delta$.}
We now prepare for the computation of the integral $m_\mu^{(d)}(\Delta)$ in terms of $\Delta(\sqrt{\mu})$. 
This requires two types of bounds on $\Delta$: One bound estimating the gap function $\Delta(p)$ at general momentum $p \in \R^d$ in terms of $\Delta(\sqrt{\mu})$ (see \eqref{eq:DeltaStep4a}), and one bound controlling the difference $|\Delta(p) - \Delta(q)|$ in some kind of Hölder-continuity estimate (see \eqref{eqn.Delta.holder}). 

\begin{lemma}\label{lem.Delta.formula}
Suppose that $V$ is as in \Cref{assumption.d=2}.  Then for $\lambda$ small enough 
\[ 
  \Delta(p) = f(\lambda) \left(\int_{\Sph^{d-1}} \hat V(p - \sqrt{\mu}q) \ud \omega (q) + \lambda \eta_\lambda(p) \right),
\]
where $f$ is some function of $\lambda$ and $\norm{\eta_\lambda}_{L^\infty(\R^d)}$ is bounded uniformly in $\lambda$.
\end{lemma}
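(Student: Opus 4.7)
The plan is to combine the Birman--Schwinger representation of the eigenvector from Step~2 of the proof of \Cref{lem.asym.Delta.d=2} with the radiality of the minimizer $\alpha$ to extract an explicit leading-order formula for $\Delta$, and then to estimate the remainder uniformly in $\lambda$. For radiality: by rotation invariance of $\mathcal{F}_0$ (since $V$ is radial by \Cref{assumption.d=2}) and uniqueness up to a phase of the minimizer $\alpha$ (\Cref{prop:alphaunique}), $\alpha$ must be radial, hence so are $V\alpha$ and its Fourier transform. Consequently
\[
u := \mathfrak{F}_\mu^{(d)}(V\alpha) \in L^2(\Sph^{d-1})
\]
is a constant function on the sphere, say $u \equiv c(\lambda)$; moreover $c(\lambda) = -\Delta(\sqrt{\mu})/(2\lambda) \ne 0$ by the very definition of $\Delta$. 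Setting $\psi = V^{1/2}\alpha$ and using the factorization \eqref{eqn.decom.BDelta.d=2} together with invertibility of $1 + \lambda V^{1/2} M_\Delta^{(d)} |V|^{1/2}$ for small $\lambda$ (from \Cref{lem.MDelta.bdd.d=2} via Neumann series), the equation $(1+B_\Delta^{(d)})\psi = 0$ yields
\[
V\alpha = |V|^{1/2}\psi = -\lambda\,m_\mu^{(d)}(\Delta)\,|V|^{1/2}\bigl(1+\lambda V^{1/2}M_\Delta^{(d)}|V|^{1/2}\bigr)^{-1} V^{1/2}(\mathfrak{F}_\mu^{(d)})^\dagger u.
\]

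Next I would apply the resolvent identity $(1+A)^{-1} = 1 - A(1+A)^{-1}$ with $A = \lambda V^{1/2}M_\Delta^{(d)}|V|^{1/2}$, and use $|V|^{1/2}V^{1/2} = V$ to split $V\alpha$ into a leading term $-\lambda m_\mu^{(d)}(\Delta) V (\mathfrak{F}_\mu^{(d)})^\dagger u$ and a correction of order $\lambda^2$. Taking Fourier transforms, factoring out the constant $u\equiv c(\lambda)$, and using the identity $\widehat{V(\mathfrak{F}_\mu^{(d)})^\dagger \mathbf{1}}(p) = (2\pi)^{-d/2}\int_{\Sph^{d-1}}\hat V(p-\sqrt{\mu}q)\,\ud\omega(q)$, this produces the claimed representation with
\[
f(\lambda) = \frac{2\lambda^2 c(\lambda) m_\mu^{(d)}(\Delta)}{(2\pi)^{d/2}}, \qquad \eta_\lambda(p) = -(2\pi)^{d/2}\,\widehat{g_\lambda}(p),
\]
where $g_\lambda := V M_\Delta^{(d)}|V|^{1/2}(1+A)^{-1} V^{1/2}(\mathfrak{F}_\mu^{(d)})^\dagger \mathbf{1}$. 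Crucially the unknown normalization $c(\lambda)$ cancels between $f$ and $\eta_\lambda$, so that $\eta_\lambda$ depends only on structurally simple objects.

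The remaining and main step is the uniform $L^\infty$-bound on $\eta_\lambda$. Using $\|\widehat{g_\lambda}\|_{L^\infty} \le (2\pi)^{-d/2}\|g_\lambda\|_{L^1}$, together with the factorization $V = V^{1/2}|V|^{1/2}$ and Cauchy--Schwarz, this reduces to the chain of bounds
\[
\|g_\lambda\|_{L^1} \le \|V\|_{L^1}^{1/2}\,\bigl\||V|^{1/2}M_\Delta^{(d)}|V|^{1/2}\bigr\|_{\mathrm{op}}\,\bigl\|(1+A)^{-1}\bigr\|_{\mathrm{op}}\,\bigl\|V^{1/2}(\mathfrak{F}_\mu^{(d)})^\dagger\mathbf{1}\bigr\|_{L^2}.
\]
The first operator norm is uniformly bounded by \Cref{lem.MDelta.bdd.d=2}; the resolvent norm is at most $2$ for small $\lambda$; and the last factor is controlled by $\|V\|_{L^1}$, since the inverse Fourier transform of the surface measure on $\sqrt{\mu}\,\Sph^{d-1}$ is pointwise bounded (explicitly a cosine in $d=1$ and the Bessel function $J_0$ in $d=2$). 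The main obstacle is largely bookkeeping: one must verify the cancellation of $c(\lambda)$ so that only the explicit quantity $V^{1/2}(\mathfrak{F}_\mu^{(d)})^\dagger\mathbf{1}$ appears in $\eta_\lambda$, rather than the $\lambda$-dependent eigenvector $u$ itself, whose normalization we do not control.
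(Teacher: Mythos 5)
Your proposal is correct and follows essentially the same route as the paper: the Birman--Schwinger equation $(1+B_\Delta^{(d)})\psi=0$ for $\psi=V^{1/2}\alpha$, the factorization \eqref{eqn.decom.BDelta.d=2}, the observation that $\mathfrak{F}_\mu^{(d)}(V\alpha)$ is constant on $\Sph^{d-1}$, a first-order expansion of the resolvent of $1+\lambda V^{1/2}M_\Delta^{(d)}|V|^{1/2}$, and finally the bound $\|\widehat{\,\cdot\,}\|_{L^\infty}\le\|\cdot\|_{L^1}$ combined with Cauchy--Schwarz and \Cref{lem.MDelta.bdd.d=2}. The one place where you diverge slightly from the paper is in how you obtain the constancy on the sphere. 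You argue directly that $\alpha$ is radial (via uniqueness of the minimizer and radiality of $V$), hence $\widehat{V\alpha}$ is radial, hence its restriction to the Fermi sphere is constant; the paper instead observes that $\mathfrak{F}_\mu^{(d)}|V|^{1/2}\phi$ is a lowest eigenvector of $T_\Delta^{(d)}$, shows that the normalized constant is the unique ground state of $\mathcal{V}_\mu^{(d)}$ (using $\hat V\le 0$), and then invokes perturbation theory to conclude it is also, for small $\lambda$, the unique ground state of $T_\Delta^{(d)}$. Your version is more elementary at this point and bypasses the perturbation-theoretic step; it is fully valid given that radiality of $\alpha$ is established in Step~1 of the proof of \Cref{lem.asym.Delta.d=2} (you should, to be precise, appeal to the positivity of $\hat\alpha$ from \Cref{prop:alphaunique} to pin the phase when deducing radiality, as the paper implicitly does). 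Since the spectral uniqueness of $u$ for $T_\Delta^{(d)}$ is needed anyway in Step~6 of \Cref{lem.asym.Delta.d=2}, there is no net saving, but the argument inside this lemma becomes a touch more transparent. The cancellation of the normalization constant $c(\lambda)$ into $f(\lambda)$ and the chain of bounds you give for $\|\eta_\lambda\|_{L^\infty}$ are correct and match the paper's estimate $\|\widehat{|V|^{1/2}\xi_\lambda}\|_{L^\infty}\le\|V\|_{L^1}^{1/2}\|\xi_\lambda\|_{L^2}$.
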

\begin{proof}
Recall that $\alpha$ is the eigenfunction of $E_\Delta + \lambda V$ with lowest eigenvalue $0$.
Then, by the Birman-Schwinger principle, $\phi = V^{1/2}\alpha$ satisfies 
\[
  B_\Delta \phi = \lambda V^{1/2} \frac{1}{E_\Delta} |V|^{1/2} V^{1/2}\alpha = -\phi.
\]
With the decomposition \Cref{eqn.decom.BDelta.d=2} then $\phi$ is an eigenfunction of 
\[
\frac{\lambda m_\mu^{(d)}(\Delta)}{1 + \lambda V^{1/2} M_\Delta^{(d)} |V|^{1/2}} V^{1/2} (\mathfrak{F}_\mu^{(d)})^\dagger \mathfrak{F}_\mu^{(d)} |V|^{1/2}
\]
of eigenvalue $-1$. Thus, $\mathfrak{F}_\mu^{(d)} |V|^{1/2} \phi$ is an eigenfunction of $T_\Delta^{(d)}$ of (lowest) eigenvalue $-1$.
Now $u = |\Sph^{d-1}|^{-1/2}$ is the unique eigenfunction corresponding to the lowest eigenvalue of $\mathcal{V}_\mu^{(d)}$ by radiality of $V$ and the assumption $\hat{V} \le 0$ (see e.g.~\cite{frank.hainzl.naboko.seiringer}).
Hence, for $\lambda$ small enough, $u$ is the unique eigenfunction of $T_\Delta^{(d)}$ of smallest eigenvalue.
Thus, 
\[
  \phi = f(\lambda)\frac{1}{1 + \lambda V^{1/2} M_\Delta^{(d)} |V|^{1/2}} V^{1/2} (\mathfrak{F}_\mu^{(d)})^\dagger u 
    = f(\lambda) \left(V^{1/2} (\mathfrak{F}_\mu^{(d)})^\dagger u + \lambda \xi_\lambda\right)
\]
for some number $f(\lambda)$. The function $\xi_\lambda$ satisfies $\norm{\xi_\lambda}_{L^2(\R^d)} \leq C$ by \Cref{lem.MDelta.bdd.d=2}.
Noting that $\Delta = -2\widehat{|V|^{1/2}\phi}$ and bounding 
$\norm{\widehat{|V|^{1/2}\xi_\lambda}}_{L^\infty} \leq \norm{V}_{L^1}^{1/2} \norm{\xi_{\lambda}}_{L^2}$ 
we get the desired.
\end{proof}

\noindent
Evaluating the formula in \Cref{lem.Delta.formula} at $p = \sqrt{\mu}$ we get $|f(\lambda)|\leq C \Delta(\sqrt{\mu})$ for $\lambda$ small enough.
This in turn implies that
\begin{equation} \label{eq:DeltaStep4a}
\Delta(p) \leq C \Delta(\sqrt{\mu})\,. 
\end{equation}
For the Hölder-continuity, we have by rotation invariance
\begin{equation*}
\begin{aligned}
  & \abs{\int \hat V(p-\sqrt{\mu}r) - \hat V(q - \sqrt{\mu}r) \ud \omega(r)} =\abs{\int \hat V(\vert p\vert e_1-\sqrt{\mu}r) - \hat V(\vert q\vert e_1 - \sqrt{\mu}r) \ud \omega(r)} 
  \\ & \quad 
    = \abs{\frac{1}{(2\pi)^{d/2}} \int_{\R^d} \ud x \left(V(x) \left(e^{i\vert p\vert x_1} - e^{i\vert q\vert x_1}\right) \int_{\Sph^{d-1}} e^{-i\sqrt{\mu}x\cdot r} \ud \omega(r)\right)}
  \\ & \quad 
    \leq C_\eps \mu^{-\eps/2}|\vert p\vert -\vert q\vert|^{\eps} \int \ud x \left(|V(x)| (\sqrt{\mu}|x|)^{\eps} \abs{\int_{\Sph^{d-1}} e^{-i\sqrt{\mu}x\cdot r} \ud \omega(r)}\right),
\end{aligned}
\end{equation*}
for any $0<\eps\leq 1$.
For $d=2$ we have $V\in L^1(\R^2)$ and 
\[
	\abs{\int_{\Sph^{d-1}} e^{-i\sqrt{\mu}xr} \ud \omega(r)} = |J_0(\sqrt{\mu}|x|)| \leq (\sqrt{\mu}|x|)^{-1/2}.
\]
For $d=1$ we have $|x|^\eps V\in L^1(\R)$ for some $\eps > 0$ and 
\[
	\abs{\int_{\Sph^{d-1}} e^{-i\sqrt{\mu}x\cdot r} \ud \omega(r)} = 2|\cos(\sqrt{\mu}|x|)| \leq 2.
\] 
We conclude that with $\eps = 1/2$ for $d=2$ and small enough $\eps > 0$ for $d=1$
\begin{equation}\label{eqn.Delta.holder}
	|\Delta(p) - \Delta(q)| 
	\leq C |f(\lambda)| \left( \mu^{-\eps/2}|\vert p\vert -\vert q\vert|^{\eps} + \lambda\right) 
	\leq C |\Delta(\sqrt{\mu})| \left( \mu^{-\eps/2}|\vert p\vert -\vert q\vert |^{\eps} + \lambda\right).
\end{equation}
Additionally, since $m_\mu^{(d)}(\Delta)\to \infty$ we have that $\Delta(p)\to 0$ at least for some $p \in \R^d$ by \eqref{mDelta}. 
Then it follows from Lemma~\ref{lem.Delta.formula} that $f(\lambda)\to 0$, 
i.e. that $\Delta(p) \to 0$ for all $p$.
\\[2mm]
\textbf{5. Calculation of the integral $m_\mu^{(d)}(\Delta)$.}
Armed with the apriori bounds \eqref{eq:DeltaStep4a} and \eqref{eqn.Delta.holder}, we can now compute the integral $m_\mu^{(d)}(\Delta)$.
Carrying out the angular integration and substituting $s = \abs{\frac{|p|^2-\mu}{\mu}}$ we have 
\[
\begin{aligned}
  m_\mu^{(d)}(\Delta)
    & = \frac{\mu^{d/2-1}}{2} \left[
      \int_0^1 \left(\frac{(1-s)^{d/2-1}-1}{\sqrt{s^2 + x_-(s)^2}} + \frac{(1+s)^{d/2-1}-1}{\sqrt{s^2 + x_+(s)^2}}\right) \ud s
      \right.
    \\ & \qquad 
      \left.
      + \int_0^1 \left(\frac{1}{\sqrt{s^2 + x_-(s)^2}} + \frac{1}{\sqrt{s^2 + x_+(s)^2}} \right)\ud s
      \right]\,,
\end{aligned}
\]
where $x_{\pm}(s) = \frac{\Delta(\sqrt{\mu}\sqrt{1\pm s})}{\mu}$. 
By dominated convergence, using that $x_\pm(s) \to 0$, 
the first integral is easily seen to converge to 
\[
  \int_0^1 \left(\frac{(1-s)^{d/2-1}-1}{s} + \frac{(1+s)^{d/2-1}-1}{s}\right) \ud s = 2 \ln c_d\,
\]
for $\lambda \to 0$.
For the second integral, we will now show that
\[
  \int_0^1 \left(\frac{1}{\sqrt{s^2 + x_\pm(s)^2}} - \frac{1}{\sqrt{s^2 + x_\pm(0)^2}} \right) \ud s \to 0\,. 
\]
In fact, the integrand is bounded by
\[
\begin{aligned}
& \abs{\frac{1}{\sqrt{s^2 + x_\pm(s)^2}} - \frac{1}{\sqrt{s^2 + x_\pm(0)^2}}}
	\\ & \quad 
		= \frac{\abs{x_{\pm}(0)^2 - x_{\pm}(s)^2}}{\sqrt{s^2 + x_\pm(s)^2}\sqrt{s^2 + x_\pm(0)^2}(\sqrt{s^2 + x_\pm(s)^2} + \sqrt{s^2 + x_\pm(0)^2})}
	\\ & \quad 
		\leq \frac{C x_{\pm}(0) (s^\eps + \lambda)}{\sqrt{s^2 + x_\pm(s)^2}\sqrt{s^2 + x_\pm(0)^2}},
\end{aligned}	
\]
using the Hölder continuity from \eqref{eqn.Delta.holder}.
By continuity of $\hat V$ there exists some $s_0$ (independent of $\lambda$) such that for $s < s_0$ we have $x_\pm(s) \geq c x_\pm(0)$. 
We now split the integration into $\int_0^{s_0}$ and $\int_{s_0}^1$.
For the first we have
\[
\begin{aligned}
	\int_0^{s_0} \abs{\frac{1}{\sqrt{s^2 + x_\pm(s)^2}} - \frac{1}{\sqrt{s^2 + x_\pm(0)^2}}} \ud s 
	& \leq C \int_0^{s_0} \frac{x_\pm(0)}{s^2 + x_\pm(0)^2} (s^\eps + \lambda)\ud s  
	= O(x_\pm(0)^\eps + \lambda)\,. 
\end{aligned}
\]
For the second we have 
\[
\begin{aligned}
	\int_{s_0}^1 \abs{\frac{1}{\sqrt{s^2 + x_\pm(s)^2}} - \frac{1}{\sqrt{s^2 + x_\pm(0)^2}}} \ud s 
	& \leq C \int_{s_0}^1 x_\pm(0) \frac{s^\eps + \lambda}{s^2} \ud s = O(x_\pm(0))\,. 
\end{aligned}
\]
Collecting all the estimates, we have thus shown that $  m_\mu^{(d)}(\Delta)$ equals
\begin{align*}
  & \mu^{d/2-1}
      \left( \ln c_d+
      \int_0^1 \frac{1}{\sqrt{s^2 + \Delta(\sqrt{\mu})^2/\mu^2}} \ud s
      +o(1)\right) \\
=\, &\mu^{d/2-1} \left( \ln c_d+\ln \left(\frac{\mu+\sqrt{\mu^2+\Delta(\sqrt{\mu})^2}}{\vert \Delta(\sqrt{\mu})\vert}\right)+o(1)\right)
=\mu^{d/2-1} \ln \left(\frac{2\mu c_d}{\vert \Delta(\sqrt{\mu})\vert}+o(1)\right)\,. 
\end{align*}
This proves the third inequality in \Cref{lem.asym.Delta.d=2}.

Combining this with the third step, one immediately sees that the gap function evaluated on the Fermi sphere vanishes exponentially fast, $\Delta(\sqrt{\mu}) \sim e^{1/\lambda e_\mu}$, as $\lambda \to 0$, recalling that $e_\mu^{(d)} < 0$ by assumption.
\\[2mm]
\textbf{6. Second order.}
To obtain the next order, we recall that $T_\Delta^{(d)}$ has lowest eigenvalue $-1$ (see \eqref{eqn.TDelta}), and hence, by first-order perturbation theory,
\begin{equation}\label{eqn.mDelta.pert.exp}
  m_\mu^{(d)}(\Delta)
  = \frac{-1}{\lambda \longip{u}{\mathfrak{F}_\mu^{(d)} V (\mathfrak{F}_\mu^{(d)})^\dagger}{u} 
    - \lambda^2 \longip{u}{\mathfrak{F}_\mu^{(d)} V M_{\Delta}^{(d)} V (\mathfrak{F}_\mu^{(d)})^\dagger}{u} + O(\lambda^3)},
\end{equation}
where $u(p) = |\Sph^{d-1}|^{-1/2}$ is the constant function on the sphere.
Recall that $u$ is the unique ground state of $ \mathcal{V}_\mu^{(d)}$. 

In the second order term we have that
\[
  \lim_{\lambda \to 0} \longip{u}{\mathfrak{F}_\mu^{(d)} V M_{\Delta}^{(d)} V (\mathfrak{F}_\mu^{(d)})^\dagger}{u} = \longip{u}{\mathcal{W}_\mu^{(d)}}{u}\,,
\]
which follows from a simple dominated convergence argument as for $T_c$, noting that $\Delta(p)\to 0$ pointwise. 

By again employing first--order perturbation theory, similarly to the last step in the proof of Proposition \ref{lem.asym.Tc.d=2}, we conclude the second equality in \Cref{lem.asym.Delta.d=2}.
\\[2mm]
\textbf{7. Comparing $\Delta(\sqrt{\mu})$ to $\Xi$.}
To prove the first equality in \Cref{lem.asym.Delta.d=2} we separately prove upper and lower bounds. 
The upper bound is immediate from 
\[
\Xi = \inf_{p \in \R^d} E_\Delta(p) = \inf_{p \in \R^d} \sqrt{|p^2 - \mu| + \Delta(p)^2} \leq \Delta(\sqrt{\mu})\,.
\]
Hence, for the lower bound, take $p \in \R^d$ with $\sqrt{|p^2 - \mu|}\leq \Xi \leq \Delta(\sqrt{\mu})$.
Then by \eqref{eqn.Delta.holder}
\[
  \Delta(p) \geq \Delta(\sqrt{\mu}) - |\Delta(p) - \Delta(\sqrt{\mu})| \geq \Delta(\sqrt{\mu}) - C \Delta(\sqrt{\mu}) \left(|\vert p\vert - \sqrt{\mu}|^\eps + \lambda\right)
  \geq \Delta(\sqrt{\mu})(1 + o(1)).
\]
In combination with the upper bound, we have thus shown that $\Xi = \Delta(\sqrt{\mu})(1 + o(1))$ as desired. This 
concludes the proof of Proposition \ref{lem.asym.Delta.d=2}. 
\end{proof}

\noindent
We conclude this subsection with several remarks, comparing our proof with those of similar results from the literature.

\begin{remark}[Structure here vs. in earlier papers on $\Xi$] \label{rmk:Xicomp}
We now compare the proof above to the proofs of the three different limits in $3$ dimensions
\cite{Hainzl.Seiringer.2008,lauritsen.energy.gap.2021,Henheik.Lauritsen.2022}: 
\begin{itemize}
\item {\bf Weak coupling:}
The structure of our proof here is very similar to that of \cite{Hainzl.Seiringer.2008}.  
Essentially, only the technical details in \Cref{lem.MDelta.bdd.d=2} and the calculation of $m_\mu^{(d)}(\Delta)$ in Step 5 are different.
\item {\bf High density:}
For the high-density limit in \cite{Henheik.Lauritsen.2022}, we needed some additional a priori bounds on $\Delta$ 
before we could employ the Birman-Schwinger argument. Apart from that, in \cite{Henheik.Lauritsen.2022} 
the comparison of $\Delta(\sqrt{\mu})$ and $\Xi$ are done right after these a priori bounds. 
Additionally, since one starts with finding a priori bounds on $\Delta$, 
one does not need the first-order analysis in Step 3.
One may think of the structure in \cite{Henheik.Lauritsen.2022} as being ordered in the above steps as follows:
$4$, $7$, $1$, $2$, $4$ (again), $5$, $6$. 
\item {\bf Low density:}
For the low-density limit in \cite{lauritsen.energy.gap.2021} the structure is quite different.
Again, one first needs some a priori bounds on $\Delta$ before one can use the Birman-Schwinger argument.
One then improves these bounds on $\Delta$ using the Birman-Schwinger argument, 
which in turn can be used to get better bounds on the error term in the decomposition of the Birman--Schwinger operator. In this sense, the Steps $2$--$4$ are too interwoven to be meaningfully separated. Also, Step $5$ is done in two parts. 
\end{itemize}
\end{remark}

\subsection{Proof of Proposition \ref{prop:Wmu}}\label{sec:pf_Wmu}
Note that $\mathcal{W}_\mu^{(d)}=\mathfrak{F}_\mu^{(d)} V M_0^{(d)} V (\mathfrak{F}_\mu^{(d)})^\dagger$,  where $M_0^{(d)}$ is defined in \eqref{M_T_kernel}.
By Lemma~\ref{lem.MT.bdd.d=2}, $V^{1/2} M_0^{(d)} V^{1/2}$ is Hilbert-Schmidt. 
The integral kernel of $\mathcal{W}_\mu^{(d)}$ is bounded by 
\begin{equation}
\vert \mathcal{W}_\mu^{(d)}(p,q)\vert \leq \frac{1}{(2\pi)^{d}}\int_{\R^{2d}} \vert V(x) \vert  \vert M_0^{(d)}(x,y)\vert  \vert V(y)\vert \ud x \ud y \leq \frac{1}{(2\pi)^{d}}\lVert V\rVert_1\lVert V^{1/2} M_0^{(d)} V^{1/2}\rVert_{\rm HS}.
\end{equation}
It follows that 
$\lVert \mathcal{W}_\mu^{(d)} \rVert_{\rm HS} 
\leq \frac{\abs{\mathbb{S}^{d-1}}}{(2\pi)^{d}}\lVert V \rVert_1 \lVert V^{1/2} M_0^{(d)} V^{1/2}\rVert_{\rm HS}$.
\qed
\\[7mm]
\emph{Acknowledgments.} 
We thank Robert Seiringer for comments on the manuscript.
J.H. gratefully acknowledges partial financial support by the ERC
Advanced Grant `RMTBeyond' No.~101020331.

\printbibliography

\end{document}